\RequirePackage{amsmath}
\documentclass[a4paper,runningheads]{llncs}

\usepackage[T1]{fontenc}
\usepackage{graphicx}
\usepackage{multirow}
\usepackage{amssymb}
\usepackage{mathtools}
\usepackage{todonotes}
\usepackage[basic,small,classfont=bold]{complexity}
\usepackage[capitalise]{cleveref}
\usepackage{proof}
\usepackage{lineno}
\usepackage{algorithm}
\usepackage[noend]{algpseudocode}
\usepackage{nicefrac}
\usepackage{textcomp}
\usepackage{xspace}
\usepackage{subcaption}
\usepackage{tikz}
\usetikzlibrary {datavisualization}
\usetikzlibrary{datavisualization.formats.functions,shadows,calc}
\usepackage{tikz}
\usetikzlibrary{shapes,shapes.geometric,arrows,fit,calc,positioning,automata,chains,matrix.skeleton, arrows.meta}
\newcommand{\pgfmathparseFPU}[1]{\begingroup%
\pgfkeys{/pgf/fpu,/pgf/fpu/output format=fixed}%
\pgfmathparse{#1}%
\pgfmathsmuggle\pgfmathresult\endgroup}



\newcommand{\ie}{i.e.~}

\newcommand{\Set}[1]{\{#1\}}
\newcommand{\close}[1]{\ensuremath{#1 \mathrm{\downarrow}}}
\newcommand{\maxac}[1]{\ensuremath{\lceil #1 \rceil}}
\newcommand{\Reg}{\mathrm{Reg}}
\newcommand{\bwd}{\mathrm{bwd}}

\newcommand{\cpre}{\mathrm{CPre}}
\makeatletter
\def\finish@kdtree{\ifx\next-\relax%
    -tree%
  \else%
    ~tree%
  \fi\xspace}
\newcommand{\kdtree}{\(k\)-d\futurelet\next\finish@kdtree}
\makeatother

\newcommand{\kdtrees}{\(k\)-d~trees\xspace}

\title{Data Structures for Finite Downsets of Natural Vectors: Theory and
  Practice}
%
\titlerunning{Data Structures for Finite Downsets of Natural Vectors}
%
\author{Micha\"el Cadilhac\inst{1}
\and
Vanessa Fl\"ugel\inst{2} \and
Guillermo A. P\'erez\inst{2}
\and
Shrisha Rao\inst{2}
}
\authorrunning{M. Cadilhac et al.}
%
\institute{DePaul University, Chicago, USA\\
\email{michael@cadilhac.name} \and
University of Antwerp -- Flanders Make, Antwerp, Belgium\\
\email{\{guillermo.perez,shrisha.rao\}@uantwerpen.be}}


\tikzset{
  data visualization/our system/.style={
    new Cartesian axis=x axis,
    new Cartesian axis=left axis,
    new Cartesian axis=right axis,
    x axis={attribute=x},
    left axis={attribute=y,unit vector={(0cm,1pt)}},
    right axis={attribute=r,unit vector={(0cm,1pt)}},
    left axis={
        visualize label={
          x axis={goto=padded min},
          node style={
            at={(0,3.2cm)},
            above
          }
        },
        label=sec.},
      right axis={
        visualize label={
          x axis={goto=padded max},
          node style={
            at={(0,3.2cm)},
            above
          }
        },
        label=ratio},
      x axis={ visualize label={
          x axis={goto pos=.5},
          node style={
            at={(0,0 |- data visualization bounding box.south)},
            below
          }
        },
        label=antichain size},
    x axis    ={length=\pgfkeysvalueof{/tikz/data visualization/scientific axes/width}},
left axis ={length=\pgfkeysvalueof{/tikz/data visualization/scientific axes/height}},
right axis={length=\pgfkeysvalueof{/tikz/data visualization/scientific
    axes/height}},
    every axis/.style={style=black!50}, 
    left axis= {visualize axis={x axis=   {goto=min}}},
    right axis={visualize axis={x axis=   {goto=max}}},
    x axis= {visualize axis={left axis={goto=min}},
      visualize axis={left axis={goto=max}}},
     all axes= {padding=.5em},
     left axis= {visualize axis={x axis=   {goto=padded min}}},
     right axis={visualize axis={x axis=   {goto=padded max}, style=blue}},
    x axis= {visualize axis={left axis={goto=padded min}}},
    left axis={visualize ticks={style={red!50!black}, direction axis=x axis,
        x axis={goto=padded min}, high=0pt, tick text at low}},
    right axis={visualize ticks={style={blue!80!black}, direction axis=x axis,
        x axis={goto=padded max}, low=0pt, tick text at high}},
    x axis={visualize ticks={direction axis=left axis, left axis={goto=padded min}, high=0pt,
        tick text at low},
    },
    left axis={include value={0}},
    right axis={include value={0}},
    x axis={include value={0}},
    all axes={ticks},
  }
}

\setlength{\intextsep}{5pt}

\begin{document}

\maketitle

\begin{abstract}
  Manipulating downward-closed sets of vectors forms the basis of so-called
  antichain-based algorithms in verification.  In that context, the dimension of
  the vectors is intimately tied to the size of the input structure to be
  verified.  In this work, we formally analyze the complexity of classical
  list-based algorithms to manipulate antichains as well as that of Zampuni\'eris's sharing trees and traditional
  and novel \kdtree-based antichain algorithms.  In contrast to the existing
  literature, and to better address the needs of formal verification, our
  analysis of \kdtree algorithms does not assume that the dimension of the vectors is fixed.
  Our theoretical results show that \kdtrees are asymptotically better than both
  list- and sharing-tree-based algorithms, as an antichain data structure, when the antichains
  become exponentially larger than the dimension of the vectors.  We evaluate
  this on applications in the synthesis of reactive systems from linear-temporal
  logic and parity-objective specifications, and establish empirically that
  current benchmarks for these computational tasks do not lead to a favorable
  situation for current implementations of \kdtrees.

  \keywords{Antichain algorithms
  \and Data structures}
\end{abstract}

\section{Introduction}
The efficiency and scalability of 
verification techniques 
such as
model checking and temporal synthesis largely depend on the size and
complexity of the input. One way to mitigate this consists in making
use of (symbolic) data structures to represent the model implicitly~\cite{DBLP:reference/mc/2018}. For instance, when the states of a model admit
some partial order and sets of states satisfying properties of interest are
downward-closed with respect to the partial order, verification algorithms can
manipulate sets of states 
by storing their (antichain of) maximal
states only. 

Antichain-based algorithms exist for various verification problems. There are antichain-based algorithms for stable failures refinement
checking, for failures-divergence refinement checking, and for probabilistic
refinement checking 
~\cite{DBLP:conf/icfem/WangS0LDWL12,DBLP:journals/lmcs/LaveauxGW21}.
Antichains have also been used in algorithms to check the inclusion between the
sets of traces (\ie data languages) recognized by generic register automata---this is undecidable, but there is a semi-algorithm based on  abstraction refinement and antichains that is known to be sound
and complete~\cite{DBLP:journals/fmsd/HolikIRV20}.  There are also antichain
algorithms to solve universality and language inclusion problems for
nondeterministic B\"uchi automata, and the emptiness problem for alternating
B\"uchi automata~\cite{DBLP:journals/corr/abs-0902-3958}. As a final example, there are antichains algorithms for the inclusion problem between
infinite-word visibly pushdown languages~\cite{DBLP:conf/tacas/DoveriGH23}.

It is important to note that there is no easy template for antichain-based
algorithms. Each algorithm mentioned above exploits a slightly different partial
order and must argue that different operations preserve closure of the sets to
be able to manipulate antichains only. In this work, we are primarily interested
in state spaces that can be encoded as vectors of natural numbers in $\mathbb{N}^k$. The partial
order we consider is the product order, that is, the component-wise order. There
are a number of antichain-based algorithms where such an encoding into natural
vectors arises naturally. They appear, for example, in checking safety properties for Petri net markings, such as mutual exclusion and coverability~\cite{DBLP:journals/sttt/DelzannoRB04} and in antichain algorithms to solve parity
and mean-payoff games with imperfect
information~\cite{DBLP:books/cu/11/0001R11,DBLP:journals/iandc/BerwangerCWDH10,DBLP:journals/acta/HunterPR18}, where
the vectors keep track of visits to vertices with ``bad priorities'' and running
sums of weights, respectively. Finally, in antichain algorithms for the synthesis
of reactive
systems~\cite{DBLP:journals/fmsd/FiliotJR11,DBLP:conf/tacas/CadilhacP23}, the
vectors keep track of visits to rejecting states (in universal co-B\"uchi
automata).

Despite the abundance of antichain-based algorithms, 
there seems to be no systematic study of the complexity of
antichain-manipulating operations. There are two notable exceptions: First,
there are algorithms and 
upper bounds for the computation of the
antichain of maximal elements given a downward-closed
set~\cite[Sec. 4]{DBLP:journals/siamcomp/DaskalakisKMRV11}. 
Second, 
a result from~\cite{DBLP:journals/fmsd/FiliotJR11} implies that computing
the intersection of a given set of antichains cannot be done in polynomial time
unless $\NP = \P$.

Similarly, the body of work focusing on data structures to support the
manipulation of antichains is surprisingly slim. Delzanno et al.~\cite{DBLP:journals/sttt/DelzannoRB04,gantySymbolicDataStructure2007} propose a representation of downward-closed sets based on sharing trees, an extension of binary decision diagrams (BDDs) for natural vectors proposed by  Zampuni\'eris~\cite{zampunieris1997sharing}. However, the work of Zampuni\'eris focuses on general sets of vectors instead of downward-closed sets and the works by Delzanno et al. do not reduce downward-closed sets to antichains of maximal elements but instead use approximate reductions.
 In~\cite{DBLP:conf/tacas/CadilhacP23}, it is suggested that some version of \kdtrees may work well for closed sets of natural vectors.  These are classic data structures used in computational
geometry to implement efficient range-search algorithms~\cite{kdtree_book}. 
In
particular, they support fast dominance queries, which 
can be used
to implement 
membership of a vector in a downward-closed set. 
Unfortunately,
in~\cite{DBLP:conf/tacas/CadilhacP23} it is also reported that a simple textbook
implementation of \kdtrees does not outperform list-based versions of antichain
operations.  We postulate that this is due to the fact that the
algorithm analyses and optimizations for \kdtrees described in the computational
geometry literature make the assumption that the dimension $k$ of the vectors is
fixed or at least logarithmically small with respect to the rest of the input~\cite{DBLP:journals/dcg/Chan19}.  In contrast, for antichain-based algorithms, the dimension is a crucial
part of the input.  
For instance, in the reactive-system synthesis application, $k$ corresponds to the number of states of a (co-)B\"uchi automaton, which is typically large, while small sets of vectors with small entries are not uncommon.

\paragraph{Our contributions.}
In this work, we formally compare the worst-case running times of classic
list-based, novel (antichain-ensuring) sharing-tree-based, and novel \kdtree-based antichain algorithms. On the way, we
re-examine classic tree-building and searching algorithms for \kdtrees. These lead to a more efficient implementation of \kdtrees than the simple textbook one used in~\cite{DBLP:conf/tacas/CadilhacP23}.  We
have implemented optimized versions of list-, sharing-tree-, and \kdtree-based algorithms
in a library written in C++20, enabling an empirical evaluation of the
algorithms.
This evaluation is carried out on
applications to synthesis of reactive systems inside
Acacia-Bonsai~\cite{DBLP:conf/tacas/CadilhacP23} and Oink~\cite{vanDijk18a,knor}.

\section{Preliminaries}
We write $\mathbb{N}$ for the set of all nonnegative integers, including $0$. When we
need to exclude $0$, we instead write $\mathbb{N}_{>0}$.
Let $k \in \mathbb{N}_{>0}$ and consider vectors
$\vec{u},\vec{v} \in \mathbb{N}^k$. Then, $\vec{u}$ is \emph{smaller} than
$\vec{v}$ (equivalently, $\vec{v}$ is \emph{larger} than $\vec{u}$), denoted by
$\vec{u} \leq \vec{v}$, if each component of $\vec{v}$ is greater than or equal to
that component of $\vec{u}$ and the inequality is strict if the relation is
strict for at least one component. If $\vec{u}$ is neither larger, nor smaller
than $\vec{v}$, then they are said to be \emph{incomparable}.

\begin{definition}[Downset]
    Let $k \in \mathbb{N}_{>0}$. A set $V \subset \mathbb{N}^k$ is a \emph{downset} if all vectors smaller than any vector in the set are also in the set, \ie if for all $\vec{u} \in \mathbb{N}^k$ and $\vec{v} \in V$ we have that $\vec{u} \leq \vec{v}$ implies $\vec{u} \in V$.
\end{definition}

For any $V \subset \mathbb{N}^k$, we use $\close{V}$ to denote the \emph{downward closure of $V$}, \ie the set obtained by adding to $V$ all the vectors smaller than the vectors in $V$. In symbols, $\close{V} = \Set{\vec{u} \in \mathbb{N}^k \mid \exists \vec{v} \in V, \vec{u} \leq \vec{v}}$. Note that the downward closure of any set of vectors is a downset.

Next, we define \emph{antichains} of natural vectors. These arise, for instance, when considering the maximal elements of a downset.
\begin{definition}[Antichain]
    Let $k \in \mathbb{N}_{>0}$. A set $V \subset \mathbb{N}^k$ is an \emph{antichain} if it consists only of pairwise incomparable vectors, \ie for all $\vec{u},\vec{v} \in V$ we have neither $\vec{u} \leq \vec{v}$ nor $\vec{u} \leq \vec{v}$ (\ie they are incomparable).
\end{definition}

\noindent
It follows from Dickson's lemma that any antichain (of natural vectors) is finite.

For a finite set $V \subset \mathbb{N}^k$, we write $\maxac{V}$ to denote the \emph{antichain of maximal elements from $V$}. In symbols, $\maxac{V} = \Set{\vec{v} \in V \mid \forall \vec{u} \in V, \vec{v} \not< \vec{u}}$.

\paragraph{Antichain algorithms.}
We are interested in efficient algorithms to manipulate $k$-dimensional downsets, for $k \in \mathbb{N}_{>0}$. Note that for any finite downset $V \subset \mathbb{N}^k$, we have that $V = \close{\maxac{V}}$. This follows from the result below, which is itself a simple consequence of the transitivity of the partial order on vectors.
\begin{proposition}\label{pro:trans}
    Let $k \in \mathbb{N}_{>0}$, $u \in \mathbb{N}^k$, and a finite set $V \subset \mathbb{N}^k$. Then, $\vec{u} \in \close{V}$ if and only if $\vec{u} \leq \vec{v}$ for some $\vec{v} \in \maxac{V}$.
\end{proposition}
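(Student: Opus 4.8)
The plan is to prove the two implications of the biconditional separately. The backward direction will be essentially immediate from the definitions, while the forward direction requires a short ``climb to a maximal element'' argument that exploits the finiteness of $V$ together with transitivity of the order, exactly as the statement hints.

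For the backward direction, suppose $\vec{u} \leq \vec{v}$ for some $\vec{v} \in \maxac{V}$. Since $\maxac{V} \subseteq V$ holds by the definition of $\maxac{\cdot}$, we have $\vec{v} \in V$, and therefore $\vec{u} \in \close{V}$ follows directly from the definition $\close{V} = \Set{\vec{u} \in \mathbb{N}^k \mid \exists \vec{v} \in V, \vec{u} \leq \vec{v}}$. I expect no difficulty in this step.

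For the forward direction, suppose $\vec{u} \in \close{V}$, so that there is some $\vec{w} \in V$ with $\vec{u} \leq \vec{w}$. The task is to upgrade $\vec{w}$ to a \emph{maximal} witness. I would consider the set $W = \Set{\vec{x} \in V \mid \vec{w} \leq \vec{x}}$, which is nonempty (it contains $\vec{w}$) and finite (being a subset of the finite set $V$), and then pick any $\leq$-maximal element $\vec{v}$ of $W$. I claim $\vec{v} \in \maxac{V}$: if not, there would be some $\vec{y} \in V$ with $\vec{v} < \vec{y}$, and then $\vec{w} \leq \vec{v} \leq \vec{y}$ would give $\vec{w} \leq \vec{y}$ by transitivity, placing $\vec{y}$ in $W$ and contradicting the maximality of $\vec{v}$ within $W$. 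Finally, $\vec{u} \leq \vec{w} \leq \vec{v}$ yields $\vec{u} \leq \vec{v}$ by transitivity, which is what is required.

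The only real obstacle, and it is a mild one, is justifying that a maximal element above $\vec{w}$ exists at all: in an arbitrary poset one could in principle face an infinite ascending chain, but here finiteness of $V$ (equivalently, of $W$) guarantees a maximal element and makes the climbing argument terminate. Once this is in place, the remainder is a routine application of transitivity and of the inclusion $\maxac{V} \subseteq V$.
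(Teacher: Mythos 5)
Your proof is correct and follows exactly the argument the paper has in mind: the paper offers no explicit proof, noting only that the statement is ``a simple consequence of the transitivity of the partial order,'' and your climb-to-a-maximal-element argument (using finiteness of $V$ to guarantee a maximal element above the witness $\vec{w}$, then transitivity twice) is the standard way to make that remark precise. Both directions are handled cleanly and no step is missing.
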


\noindent
Hence, one can store the antichain $A = \maxac{V}$ as a representation of $V = \close{A}$.

To realize (binary) operations on downsets represented by the antichains $A$ and $B$, an algorithm can avoid (as much as possible) to explicitly compute the downward closures of the antichains. Further, the result of the operation should also be output by the algorithm represented as antichains. Since, we focus on union and intersection, the following result implies that a single antichain suffices as output of such an algorithm.
\begin{proposition}
    Let $k \in \mathbb{N}_{>0}$ and $U,V \subset \mathbb{N}^k$. If $U$ and $V$ are both downsets, then $U \cup V$ and $U \cap V$ are also downsets.
\end{proposition}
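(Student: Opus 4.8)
The plan is to verify the defining closure property of a downset directly for each of the two sets, handling $U \cup V$ and $U \cap V$ separately. Recall that a set $W \subset \mathbb{N}^k$ is a downset precisely when, for every $\vec{v} \in W$ and every $\vec{u} \in \mathbb{N}^k$ with $\vec{u} \leq \vec{v}$, we also have $\vec{u} \in W$. So in each case I would fix an arbitrary $\vec{v}$ in the set in question together with an arbitrary $\vec{u} \leq \vec{v}$, and then argue that $\vec{u}$ lies in that same set.

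For the union, I would start from $\vec{v} \in U \cup V$ and split on which of the two sets contains $\vec{v}$. If $\vec{v} \in U$, the downset hypothesis on $U$ gives $\vec{u} \in U \subseteq U \cup V$; the case $\vec{v} \in V$ is symmetric. Thus in each branch a single one of the two hypotheses suffices, and $\vec{u} \in U \cup V$ follows. For the intersection, I would instead start from $\vec{v} \in U \cap V$, so that $\vec{v} \in U$ and $\vec{v} \in V$ simultaneously; applying the downset property of $U$ yields $\vec{u} \in U$, and applying that of $V$ yields $\vec{u} \in V$, whence $\vec{u} \in U \cap V$.

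I do not expect any genuine obstacle: the statement is an immediate consequence of the definition of a downset, with no appeal to finiteness, to Dickson's lemma, or to any feature of the product order beyond the closure property itself. The only point needing a little care is the asymmetry between the two operations—the union requires exactly one of the two downset hypotheses in each case of the split, whereas the intersection requires both to be invoked at once—so I would state the case analysis for the union explicitly rather than gloss over it.
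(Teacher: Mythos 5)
Your proof is correct and is exactly the standard element-chasing argument; the paper itself states this proposition without proof, treating it as immediate from the definition, and your write-up supplies precisely the routine verification that is being omitted. Nothing further is needed.
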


\noindent
Concretely then, given $A = \lceil U \rceil$ and $B = \lceil V \rceil$, we are looking for efficient algorithms to compute $\maxac{\close{A} \cup \close{B}}$ and $\maxac{\close{A} \cap \close{B}}$. In the rest of this paper we consider data structures to encode $A$, $B$, and 
their union or intersection.

Finally, there is a decision problem whose complexity is related to that of (all variations) of the union and intersection algorithms present in this paper. Namely, given a $k$-dimensional vector $\vec{v}$ and an antichain $A \subset \mathbb{N}^k$ representing a downset, determine whether $\vec{v} \in \close{A}$. We call this the \emph{membership problem}.

\paragraph{Algorithm analysis.}
For all our analyses, our working computational model is that
of a random-access machine instead of a Turing machine. In particular, this
means we assume indirect addressing takes constant time. Furthermore, we
assume comparing two natural numbers $a,b \in \mathbb{N}$
takes constant time. That is, all of the following 
constitute 
atomic operations: $a \leq b$, $a < b$, and $a = b$.

\section{List-based antichain algorithms}
Fix $k \in \mathbb{N}_{>0}$ and a finite $k$-dimensional downset represented by the antichain $A \subset \mathbb{N}^k$ of its maximal elements. Write $m$ for the cardinality of $A$, \ie $|A| = m$.

\subsection{The membership problem}
The na\"{i}ve approach to check whether a vector $\vec{v} \in \mathbb{N}^k$ is in the
downset $\close{A}$ consists in comparing it, component by component, with each
vector in $A$.  Correctness of this approach follows from \Cref{pro:trans}.
This requires $k m$ time in the worst case since checking whether a vector is
smaller than another vector can be done in $k$ comparisons of numbers.
\begin{proposition}\label{pro:list-member}
    List-based algorithms solve membership in time $O(km)$.
\end{proposition}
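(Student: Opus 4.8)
The plan is to bound the cost of the naïve membership test by analyzing its two nested levels of work. By \Cref{pro:trans}, we have $\vec{v} \in \close{A}$ if and only if $\vec{v} \leq \vec{u}$ for some $\vec{u} \in \maxac{A} = A$ (here $A$ is already an antichain, so $\maxac{A} = A$). This justifies the algorithm: iterate over the $m$ vectors $\vec{u} \in A$ and test whether $\vec{v} \leq \vec{u}$, returning ``yes'' as soon as one such $\vec{u}$ is found and ``no'' if none is.

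First I would establish the cost of a single comparison $\vec{v} \leq \vec{u}$. By definition of the product order, $\vec{v} \leq \vec{u}$ holds exactly when $v_i \leq u_i$ for every coordinate $i \in \{1, \dots, k\}$. Each such coordinate test is a comparison of two naturals, which is atomic in our random-access model as stipulated in the algorithm-analysis paragraph. Hence a single vector comparison takes $O(k)$ time (and one can short-circuit as soon as a coordinate violates the inequality, but this only helps).

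Next I would combine this with the outer loop. In the worst case---for instance when $\vec{v} \notin \close{A}$, so that no early termination occurs---the algorithm performs the $O(k)$ comparison against all $m$ vectors of $A$, for a total of $O(km)$ time. Accounting for loop bookkeeping adds only constant overhead per iteration, which is absorbed into the $O$ notation, giving the claimed $O(km)$ bound.

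I do not expect a genuine obstacle here, as the statement is an elementary upper bound whose correctness rests entirely on \Cref{pro:trans} and the constant-time comparison assumption. The only point requiring mild care is to state the model assumptions explicitly so that the per-coordinate comparison is legitimately $O(1)$ rather than depending on the bit-length of the entries; this is exactly what the random-access-machine convention in the preliminaries provides.
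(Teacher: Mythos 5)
Your proposal is correct and matches the paper's own argument, which is given in the paragraph preceding the proposition: correctness via \cref{pro:trans}, $k$ atomic number comparisons per vector under the stated machine model, and $m$ vectors in total, giving $O(km)$. No gaps.
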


\subsection{The union operation}\label{sec:list-union}
Henceforth, let us fix a second antichain $B \subset \mathbb{N}^k$ representing a $k$-dimensional downset. Further, write $|B| = n$ for the cardinality of $B$.

To compute the antichain corresponding to $\maxac{\close{A}\cup \close{B}}$, we first note that this antichain will be a subset of vectors in $A$ and $B$. 
\begin{lemma}[{\cite[Prop. 2, (i)]{DBLP:journals/fmsd/FiliotJR11}}]\label{lem:union}
    Let $C = \maxac{\close{A} \cup \close{B}}$. Then, $C \subseteq A \cup B$. Moreover, for all $\vec{u} \in A$, we have $\vec{u} \not\in C$ if and only if $\vec{u} < \vec{v}$ for some $\vec{v} \in B$.
\end{lemma}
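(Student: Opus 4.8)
The plan is to prove the two parts of Lemma~\ref{lem:union} separately, relying on \Cref{pro:trans} to translate membership in a downward closure into a domination statement about antichain elements.

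First I would establish the inclusion $C \subseteq A \cup B$. Take any $\vec{w} \in C = \maxac{\close{A} \cup \close{B}}$. By definition of $\maxac{\cdot}$, the vector $\vec{w}$ is an element of $\close{A} \cup \close{B}$, so it lies in $\close{A}$ or in $\close{B}$; by symmetry assume $\vec{w} \in \close{A}$. By \Cref{pro:trans}, there is some $\vec{v} \in \maxac{A} = A$ with $\vec{w} \leq \vec{v}$. Since $\vec{v} \in A \subseteq \close{A} \subseteq \close{A} \cup \close{B}$, maximality of $\vec{w}$ forbids $\vec{w} < \vec{v}$, so in fact $\vec{w} = \vec{v} \in A$. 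The symmetric argument gives $\vec{w} \in B$ in the other case, establishing $C \subseteq A \cup B$.

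For the second part, fix $\vec{u} \in A$ and argue the equivalence in both directions. For the forward direction I would argue contrapositively: suppose no $\vec{v} \in B$ satisfies $\vec{u} < \vec{v}$; I must show $\vec{u} \in C$. Since $\vec{u} \in A$, clearly $\vec{u} \in \close{A} \cup \close{B}$, so it remains to show $\vec{u}$ is maximal in this set, \ie that there is no $\vec{w} \in \close{A} \cup \close{B}$ with $\vec{u} < \vec{w}$. Such a $\vec{w}$ would lie in $\close{A}$ or $\close{B}$; by \Cref{pro:trans} it would be dominated by some element of $A$ or of $B$, yielding (by transitivity) either an element $\vec{a} \in A$ with $\vec{u} < \vec{a}$, contradicting that $A$ is an antichain, or an element $\vec{v} \in B$ with $\vec{u} < \vec{v}$, contradicting our assumption. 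Hence $\vec{u} \in C$. For the converse direction, suppose $\vec{u} < \vec{v}$ for some $\vec{v} \in B$; then $\vec{v} \in \close{B} \subseteq \close{A} \cup \close{B}$ witnesses that $\vec{u}$ is not maximal, so $\vec{u} \notin C$.

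The only subtle point, which I would treat as the main thing to get right rather than a genuine obstacle, is the use of the fact that $A$ and $B$ are themselves antichains of maximal elements, so that $\maxac{A} = A$ and $\maxac{B} = B$; this is what lets \Cref{pro:trans} be applied with the clean conclusion that domination is realized by an element of $A$ (resp.\ $B$) and simultaneously rules out strict domination within $A$. Everything else is a routine combination of transitivity of $\leq$ and the definitions of $\close{\cdot}$ and $\maxac{\cdot}$.
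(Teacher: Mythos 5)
Your proof is correct. Note that the paper itself gives no proof of \Cref{lem:union}: it imports the statement from Filiot et al.\ (the cited Prop.~2(i)), so there is nothing to compare against; your argument is a sound, self-contained derivation of exactly what is claimed, correctly exploiting that $\maxac{A}=A$ and $\maxac{B}=B$ for antichains. One tiny simplification: wherever you invoke \Cref{pro:trans} to extract a dominating element of $A$ (resp.\ $B$) from $\vec{w}\in\close{A}$ (resp.\ $\close{B}$), the definition of $\close{\cdot}$ already hands you such an element directly, so the proposition is not strictly needed there.
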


Based on this, we can check, for each vector in $A$, whether it is in the downward closure of $B$ using the membership algorithm described earlier. If the answer is yes, the vector from $A$ is discarded, if not, it is kept for the resulting antichain. We repeat the process once again but now checking, for each vector in $B$, whether some vector in the downward closure of $A$ is strictly larger than it. (This can be done with a minimally modified membership algorithm.) Both of these checks take $kmn$ time in the worst case: when the new antichain is $A\cup B$.
\begin{proposition}
    List-based algorithms for union run in time $O(kmn)$.
\end{proposition}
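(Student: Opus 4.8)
The plan is to establish the $O(kmn)$ running time for the list-based union algorithm by carefully accounting for the two symmetric filtering passes described before the statement. First I would invoke \Cref{lem:union}, which guarantees that the output antichain $C = \maxac{\close{A} \cup \close{B}}$ is a subset of $A \cup B$, and which characterizes exactly which vectors survive: a vector $\vec{u} \in A$ is excluded precisely when it is strictly dominated by some vector in $B$, and symmetrically for $\vec{v} \in B$. This reduces the correctness burden to verifying that the two passes faithfully implement these two membership-style tests, after which only the timing remains to be bounded.

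For the time analysis, I would treat each pass separately and then sum. In the first pass, for each of the $m$ vectors $\vec{u} \in A$ we must decide whether $\vec{u} < \vec{v}$ for some $\vec{v} \in B$. By the reasoning behind \Cref{pro:list-member}, a single such dominance test against the whole of $B$ amounts to comparing $\vec{u}$ component-by-component against each of the $n$ vectors of $B$, costing $O(kn)$ time per vector of $A$ under the random-access model where each numeric comparison is atomic. Summing over all $m$ vectors in $A$ yields $O(kmn)$. The second pass is entirely symmetric: for each of the $n$ vectors in $B$ we run the (minimally modified) membership check against the $m$ vectors of $A$, again at cost $O(km)$ per vector, for a total of $O(kmn)$. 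Adding the two passes preserves the bound $O(kmn)$.

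The only genuinely delicate point, and the step I expect to require the most care, is arguing that the bound is tight rather than merely an upper bound, so that $O(kmn)$ is the correct worst-case statement: the text already flags the witnessing case, namely when $C = A \cup B$ so that no vector is ever discarded and every comparison in both passes is forced to completion without early termination. I would make explicit that in this situation each of the $m$ vectors of $A$ is compared in full against all $n$ vectors of $B$ (and vice versa), and that incomparability of distinct vectors means the per-pair comparison can in the worst case inspect all $k$ components before concluding, so no shortcut reduces the asymptotic cost. Everything else is routine bookkeeping within the stated computational model, and the two symmetric passes combine additively to give the claimed $O(kmn)$ worst-case running time.
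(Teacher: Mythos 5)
Your proof is correct and follows essentially the same route as the paper: two symmetric filtering passes justified by \cref{lem:union}, each implemented via the $O(km)$-per-query membership check of \cref{pro:list-member}, summing to $O(kmn)$, with the case $C = A \cup B$ as the worst-case witness. The only (harmless) divergence is that you use the strict test $\vec{u} < \vec{v}$ for the pass over $A$ where the paper uses the non-strict closure check; this does not affect correctness or the bound.
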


We describe a small optimization for the union algorithm that halves the number of comparisons required. Instead of comparing each element of $A$ with each element of $B$ twice, when comparing the components of a vector $\vec{u} \in A$ with those of $\vec{v} \in B$, we check for equality until we find a dimension $1\leq i\leq k$ such that $u_i \neq v_i$. Let us assume, without loss of generality, $u_i > v_i$. We can now conclude that $\vec{u} \not< \vec{v}$. Thus, for the remaining components $i<p\leq k$, we only need to check whether $u_p>v_p$, to determine whether $\vec{u} > \vec{v}$. In total, these are $k+1$ comparisons for each pair of vectors instead of $2k$.

\subsection{The intersection operation}\label{sec:list-inter}
For the intersection, we note that $\maxac{\close{A} \cap \close{B}}$ will be a subset of the \emph{meets} of vectors in $A$ and $B$, \ie their component-wise minimum. Formally, the meet of two vectors $\vec{u}, \vec{v} \in \mathbb{N}^k$ is defined as 
\(
    \vec{u} \sqcap \vec{v} =(\min(u_1,v_1), \dots,\min(u_k,v_k)).
\)
The notion is extended to sets: 
$U \sqcap V$ of $U,V \subset \mathbb{N}^k$ is the set $\Set{\vec{u} \sqcap \vec{v} \mid \vec{u} \in U, \vec{v} \in V}$.
\begin{lemma}[{\cite[Prop. 2, (ii)]{DBLP:journals/fmsd/FiliotJR11}}]\label{lem:inter}
    Let $C = \maxac{\close{A} \cap \close{B}}$. Then, $C = \maxac{A \sqcap B}$.
\end{lemma}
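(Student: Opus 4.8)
The plan is to prove the set equality $C = \maxac{A \sqcap B}$ by showing the two inclusions $\maxac{A \sqcap B} \subseteq C$ and $C \subseteq \maxac{A \sqcap B}$, where $C = \maxac{\close{A} \cap \close{B}}$. Before tackling either inclusion, I would establish a convenient characterization of $\close{A} \cap \close{B}$ itself. Using \Cref{pro:trans}, a vector $\vec{w}$ lies in $\close{A} \cap \close{B}$ if and only if there exist $\vec{u} \in A$ and $\vec{v} \in B$ with $\vec{w} \leq \vec{u}$ and $\vec{w} \leq \vec{v}$. Since the meet $\vec{u} \sqcap \vec{v}$ is precisely the greatest lower bound of $\vec{u}$ and $\vec{v}$ in the product order, this is equivalent to $\vec{w} \leq \vec{u} \sqcap \vec{v}$ for some $\vec{u} \in A, \vec{v} \in B$. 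In other words, $\close{A} \cap \close{B} = \close{A \sqcap B}$, which is the key bridge lemma. I expect establishing this identity to be the technical heart of the argument, and the main obstacle is making precise the claim that the meet is the greatest common lower bound: one must check that $\vec{w} \leq \vec{u}$ and $\vec{w} \leq \vec{v}$ holds exactly when $w_i \leq \min(u_i, v_i)$ for every coordinate $i$, which follows coordinate-wise but deserves to be spelled out carefully.

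Once the identity $\close{A} \cap \close{B} = \close{A \sqcap B}$ is in hand, the rest is a general fact about downsets. I would argue that for any finite set $S \subset \mathbb{N}^k$, the maximal elements of $\close{S}$ coincide with $\maxac{S}$. Concretely, applying this with $S = A \sqcap B$ gives
\begin{equation*}
  C = \maxac{\close{A} \cap \close{B}} = \maxac{\close{A \sqcap B}} = \maxac{A \sqcap B},
\end{equation*}
which is exactly the claim. So the remaining work reduces to justifying that taking maximal elements commutes with downward closure in the expected way.

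To prove $\maxac{\close{S}} = \maxac{S}$ for a finite $S$, I would show both inclusions. For $\maxac{S} \subseteq \maxac{\close{S}}$, any maximal element $\vec{v}$ of $S$ is clearly in $\close{S}$; if it were dominated by some $\vec{w} \in \close{S}$ with $\vec{v} < \vec{w}$, then by \Cref{pro:trans} there is $\vec{s} \in \maxac{S} \subseteq S$ with $\vec{w} \leq \vec{s}$, giving $\vec{v} < \vec{s}$ and contradicting maximality of $\vec{v}$ in $S$. For the reverse inclusion, a maximal element of $\close{S}$ must in fact lie in $S$ (otherwise it would be strictly below some element of $S$, again by \Cref{pro:trans}, contradicting its maximality), and being maximal in the larger set $\close{S}$ it is a fortiori maximal in $S$. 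The only subtlety here is keeping track of strict versus non-strict comparisons, since the definition of $\maxac{\cdot}$ uses $\vec{v} \not< \vec{u}$; I would handle this by repeatedly invoking \Cref{pro:trans} to lift any dominating vector to an actual element of the generating set, and transitivity of $\leq$ to close the loop. This comparison bookkeeping, rather than any deep structural insight, is where the proof is most error-prone.
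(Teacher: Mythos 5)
Your proof is correct, and it is essentially the standard argument: the paper itself does not prove this lemma but imports it from Filiot et al.\ (Prop.~2(ii)), whose proof likewise rests on the identity $\close{A} \cap \close{B} = \close{A \sqcap B}$ (the meet being the componentwise greatest lower bound) followed by the observation that taking maximal elements is invariant under downward closure. Both of your supporting steps check out against the paper's definitions (in particular, $\close{S}$ is finite for finite $S \subset \mathbb{N}^k$, so $\maxac{\close{S}}$ is well-defined), so there is no gap.
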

The cardinality of $A \sqcap B$ is at most $mn$. Now, to compute its antichain of maximal elements, we check the membership of each vector in $A \sqcap B$ within the downward closure of the other vectors using the membership algorithm introduced earlier. This results in a quadratic worst-case running time.
\begin{proposition}\label{pro:inter-list}
    List-based algorithms for 
    intersection 
    run in time $O(km^2n^2)$.
\end{proposition}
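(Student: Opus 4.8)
The plan is to follow the two-stage recipe implicit in the statement: first reduce the intersection to the computation of $\maxac{A \sqcap B}$, and then bound the cost of materialising $A \sqcap B$ and of filtering out its non-maximal elements. The reduction is immediate from \Cref{lem:inter}, which tells us that $\maxac{\close{A} \cap \close{B}} = \maxac{A \sqcap B}$; hence it suffices to (i) build the set $A \sqcap B$ and (ii) extract its maximal elements.

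For step (i), I would iterate over all $m \cdot n$ pairs $(\vec{u},\vec{v}) \in A \times B$ and compute each meet $\vec{u} \sqcap \vec{v}$ componentwise. Since a single meet is $k$ componentwise minima, each obtained in a constant number of atomic comparisons, this stage runs in $O(kmn)$ time and produces a list $L$ with $|L| \leq mn$. For step (ii), I would compute $\maxac{L}$ directly from the definition $\maxac{L} = \Set{\vec{w} \in L \mid \forall \vec{w}' \in L, \vec{w} \not< \vec{w}'}$: a vector $\vec{w} \in L$ is retained exactly when no other element of $L$ strictly dominates it. This is precisely a strict-domination membership test of $\vec{w}$ against the list $L$, which, by the argument behind \Cref{pro:list-member} applied to a list of length $|L|$, costs $O(k|L|) = O(kmn)$ time (using the minimally modified membership routine that tests for strict domination rather than $\leq$). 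Running this test once for each of the $|L| \leq mn$ candidate vectors gives $O(mn) \cdot O(kmn) = O(km^2n^2)$, which subsumes the $O(kmn)$ cost of step (i) and yields the claimed bound.

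The part that most needs care is the bookkeeping in step (ii) rather than any deep argument: because $L = A \sqcap B$ need not be an antichain and may contain repeated or comparable vectors, the maximality filter must use \emph{strict} comparison so that genuinely maximal elements are never discarded, while equal vectors are deduplicated so that the output is a set. This does not affect the asymptotics, since the filter still performs $O(mn)$ membership queries on a list of length $O(mn)$, but it is where correctness hinges. The $O(km^2n^2)$ running time is incurred in the worst case regardless of how many meets survive, since each candidate triggers a full scan of $L$; this is witnessed, for instance, when a constant fraction of the $mn$ meets remain pairwise incomparable.
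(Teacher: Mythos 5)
Your proposal is correct and follows essentially the same route as the paper: reduce to $\maxac{A \sqcap B}$ via \cref{lem:inter}, enumerate the at most $mn$ meets in $O(kmn)$ time, and then run a (strict-domination) membership check for each candidate against the full list, giving $O(mn) \cdot O(kmn) = O(km^2n^2)$. Your extra remarks on strict comparison and deduplication are sensible refinements of the same argument rather than a different approach.
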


Observe that intersection is much more costly than union even when working
with antichains as a representation of downsets. 
There is no known approach to computing intersection which avoids
the explicit computation of the set of meets of both antichains. We do know that the intersection of a given set of antichains cannot be done in polynomial time unless $\P = \NP$~{\cite[Proposition 5]{DBLP:journals/fmsd/FiliotJR11}}.
%

As a small optimization, we
observe that a vector $\vec{v} \in A$ can be excluded from the computation of
all meets if the membership check $\vec{v} \in \close{B}$ is positive, \ie
there is some $\vec{w} \in B$ such that $\vec{v} \leq \vec{w}$. This is
because all elements in $\{\vec{u} \sqcap \vec{v} \mid \vec{u} \in B\}$ will
be smaller than $\vec{v} \sqcap \vec{w} = \vec{v}$. This can be
used to reduce the number of meets that must be considered. 
To conclude, one could also follow the analysis from~\cite[Theorem 11]{DBLP:journals/siamcomp/DaskalakisKMRV11} to establish a bound which replaces one of the $mn$ factors in \cref{pro:inter-list} by the \emph{width}, \ie the largest antichain over $A \sqcap B$.

\section{Sharing-tree-based antichain algorithms}
As before, let us fix $k \in \mathbb{N}_{>0}$ and a finite $k$-dimensional downset represented by the antichain $A \subset \mathbb{N}^k$ of its $m$ maximal elements. An additional parameter of interest is $W = \max_{\vec{v} \in A} \lVert \vec{v} \rVert_\infty$, the maximal norm over elements in $A$. 

In this section, instead of encoding $A$ as a plain list, we construct a \emph{sharing tree} \cite{zampunieris1997sharing} for it. 
One can look at each vector $v\in V$ as a word over $\mathbb{N}$ of length $k$. Hence, $A$ is a (finite) regular language. The sharing tree of $A$ is nothing more than its minimal deterministic \emph{acyclic} finite-state automaton (DFA) \cite{Daciuk,Blumer}.

\begin{definition}[Sharing tree]
\label{def:st}
A sharing tree\footnote{It is a directed acyclic graph, not a tree, but it was thus named in earlier literature.} $(N, r, \mathrm{val}, \mathrm{succ})$ is a rooted acyclic graph  where \(N = N_0 \uplus \dots \uplus N_{k}\) is a set of nodes, partitioned into $k+1$  \emph{layers}, \(N_0 = \{r\}\) is the root, \(\mathrm{val} \colon N \rightarrow \mathbb{N}\cup\{\top\}\) is a value-labeling function that satisfies $\mathrm{val}(r)=\top$, and \(\mathrm{succ}\colon N \rightarrow 2^N\) is the successor function. Additionally:
\begin{enumerate}
    \item For all $0 \leq i < k$ and $n \in N_i$, $\mathrm{succ}(n) \subseteq N_{i+1}$, i.e., nodes can only have edges to the next layer;
    \item For all $n \in N$ and $s_1\neq s_2 \in \mathrm{succ}(n)$, it holds that \(\mathrm{val}(s_1) \neq \mathrm{val}(s_2)\), i.e., two successors cannot have the same value;
    \item For all $0 \leq i < k$ and $n_1\neq n_2 \in N_i$, \(\mathrm{val}(n_1) = \mathrm{val}(n_2) \implies \mathrm{succ}(n_1) \neq \mathrm{succ}(n_2)\), i.e., same-layer nodes cannot have the same label and successors;
    \item For all $n \in N_{k}$, \(succ(n) = \emptyset\), i.e.,  nodes on the last layer have no successors;\label{itm:def3-3}
    \item If $n\in N_1\cup\dots\cup N_{k-1}$, then $\mathrm{succ}(n)\neq\emptyset$, that is, each node which is not $r$ or in $N_{k}$ must have at least one successor.
\end{enumerate}
\end{definition}

Note that we label nodes of the graph and not the edges for ease of notation. Equivalently, one can label all the incoming edges of a node by the value of that node to obtain a DFA in the usual sense.

\subsection{Growing sharing trees}
Our algorithm for building a sharing tree is given in \Cref{alg:build_stree}, where $j = 0$ initially and (for technical convenience) $v_0 = \top$ for all $\vec{v} \in A$. Intuitively, one builds an (implicit) trie for $A$ and then minimizes the resulting DFA in a bottom-up fashion. Minimization, \ie merging of the (language-)equivalent nodes, is then handled \`a la Revuz \cite{DBLP:journals/tcs/Revuz92} by giving unique identifiers to the nodes based on their label and the indices of their successors in their layer.\footnote{By \Cref{itm:def3-3} this combination is indeed unique per layer.}
\begin{algorithm}
\caption{BuildSharingTree($A,j$)}\label{alg:build_stree}
\begin{algorithmic}[1]
\If{$j=k$}
    \State Return (cached) leaf $T$ with label $v_k$, where $A = \{\vec{v}\}$
\EndIf
\ForAll{$\vec{v} \in A$}
    \State add $\vec{v}$ to bucket[$v_{j+1}$]
\EndFor
\State $T \gets$ tree node with label $v_j$ 
\ForAll{$a \in \mathbb{N}$ such that bucket[$a$] is nonempty, in decreasing order}
    \State $T_a, \mathrm{idx}_{j+1}(a) \gets$ BuildSharingTree(bucket[$a$], $j+1$) \Comment{New node \& its index}
    \State add $T_a$ as successor of $T$
\EndFor
\State Set id of $T$ to $(j,v_j,\mathrm{idx}_{j+1}(0), \mathrm{idx}_{j+1}(1)\dots)$ \Comment{Uniquely identifies this subtree}
\State Return (cached) (sub)tree root $T$
\end{algorithmic}
\end{algorithm}

Note that every layer of the trie will have at most $m$ nodes. Then, since we have $k$ layers, and every recursive call of the algorithm runs in $O(W)$ time (initializing the buckets to empty already costs that much), we get that the tree can be computed in time $O(kmW)$. Observe that if $W > m$ one can ``compress'' the vectors by sorting them per dimension (in time $O(km \log m)$) and keeping only their position in the corresponding sorted list. This ensures $W \leq m$ and one can store the sorted lists as an $m \times k$ matrix to keep the original values (the components of which can be accessed in constant time via the stored indices).

\begin{lemma}\label{lem:build-st}
    A sharing tree, with $O(km)$ nodes and edges, for a set of $m$ vectors in dimension $k$ with max norm $W$ can be constructed in time $O(km\min(m,W))$.
\end{lemma}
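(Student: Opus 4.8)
The plan is to establish the three quantitative claims separately: the $O(km)$ bound on the number of nodes, the matching bound on the edges, and the $O(km\min(m,W))$ time bound. For the size bounds I would first reason about the implicit trie that \Cref{alg:build_stree} constructs before any merging. Each node in layer $i$ of the trie is in bijection with a distinct length-$i$ prefix $(v_1,\dots,v_i)$ of some vector of $A$; since $|A| = m$, there are at most $m$ such prefixes, so every layer has at most $m$ nodes and the trie has at most $(k+1)m = O(km)$ nodes. Minimization (the caching by Revuz-style identifiers) only ever merges nodes within the same layer, so it cannot increase any layer's node count and the bound survives. For edges, I would observe that in the trie the edges from layer $i$ to layer $i+1$ number exactly $|N_{i+1}|\le m$ (each non-root trie node has a unique parent), so the trie has at most $km$ edges; merging two nodes with identical successor sets replaces two identical copies of an out-edge set by a single copy while only redirecting in-edges, hence it never increases the edge count. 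Thus the minimized sharing tree has $O(km)$ nodes and edges.

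For the unconditional $O(kmW)$ time bound I would charge the cost to the recursive calls. The recursion tree of \Cref{alg:build_stree} mirrors the trie: each vector follows exactly one root-to-leaf path, so there is exactly one call per trie node, and since each layer has at most $m$ nodes there are at most $(k+1)m = O(km)$ calls. Each call with $j<k$ spends $O(W)$ initializing the buckets indexed by component values in $\{0,\dots,W\}$, contributing $O(kmW)$ in total; the remaining work in such a call --- distributing its input subset $A'$ among the buckets --- is $O(|A'|)$, and since the inputs of all calls at a fixed layer partition $A$, this distribution work sums to $O(m)$ per layer and $O(km)$ overall. The leaf-level calls ($j=k$) do $O(1)$ work each. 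Correctness of the output, \ie that it satisfies \Cref{def:st} and is minimal, follows from the Revuz-style identifiers uniquely naming each subtree per layer, which is the standard bottom-up minimization argument and which I would only sketch.

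The final step is to replace the factor $W$ by $\min(m,W)$ via coordinate compression, which is only needed when $W>m$. Here I would sort, for each of the $k$ dimensions, the $m$ component values at a cost of $O(km\log m)$, and replace each component by its rank in the corresponding sorted list, storing the $m\times k$ matrix of original values so that any label can be recovered in constant time. Since rank is monotone, the compressed vectors induce exactly the same sharing-tree shape, and their entries now lie in $\{0,\dots,m-1\}$, \ie the effective max norm is $W'\le m$. Building the tree on the compressed data then costs $O(kmW') = O(km^2)$, which dominates the $O(km\log m)$ sorting cost, giving $O(km^2)=O(km\min(m,W))$ in this regime; when $W\le m$ one builds directly in $O(kmW)=O(km\min(m,W))$. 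Combining the two regimes yields the claimed bound.

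I expect the main obstacle to be the edge-count argument: one must verify carefully that bottom-up minimization of the acyclic DFA cannot create more edges than the trie has, and that the per-layer node bound is genuinely preserved under merging. The time accounting is routine once the recursion tree is identified with the trie, and the compression argument is a standard amortization, so those parts should go through without difficulty.
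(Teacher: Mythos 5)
Your proposal is correct and follows essentially the same route as the paper, which argues (in the paragraph preceding the lemma) that each trie layer has at most $m$ nodes, that each of the $O(km)$ recursive calls costs $O(W)$ for bucket initialization, and that coordinate compression via per-dimension sorting reduces to the case $W\le m$. You merely supply details the paper leaves implicit (the prefix bijection, the unique-parent edge count, that Revuz-style merging cannot increase node or edge counts, and the explicit two-regime accounting for the $\min$), all of which check out.
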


To avoid wasting memory due to sparse successor arrays, the trie can be built by first sorting the vectors based on the relevant component and using a linked list instead of an array with $\{0,1,\dots,W\}$ as indices. This also means we can conveniently store all successors in decreasing order of their labels. Additionally, instead of an exact cache for nodes, one can use a hash table.

Although building a sharing tree for $A$ is more costly than storing it as a list, it may result in exponential savings in terms of space and subsequent membership queries. Indeed, due to sharing, the constructed sharing tree could be exponentially smaller than the starting set. For example, the antichain represented by the language $\{01+10\}^n$ has size exponential in $n$, but can be represented as a sharing tree of size $O(n)$ (cf.~\cite[Sec. 1.3.3]{zampunieris1997sharing} and~\cite[Prop. 2]{DBLP:journals/sttt/DelzannoRB04}).

\subsection{The membership problem}

To determine whether a vector $\vec{u} \in \mathbb{N}^k$ is part of the downset $\close{A}$, we can use a depth-first search (DFS). Starting from the root, the values of the successors are compared to the corresponding component of $\vec{u}$ at every layer. If the value of the node is greater than or equal to the component, the branch is followed further down the sharing tree or membership is confirmed, in case of being at the last layer, otherwise it is discarded. Based on the linear-time complexity of DFS and \Cref{lem:build-st}, we get the following result.

\begin{proposition}
    Sharing-tree-based algorithms solve membership in time $O(km)$.
\end{proposition}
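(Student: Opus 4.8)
The plan is to show that the depth-first search sketched above can be organized so that it inspects each node and each edge of the sharing tree only a constant number of times; since \Cref{lem:build-st} guarantees the tree has $O(km)$ nodes and edges, this immediately yields the claimed $O(km)$ bound.

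First I would restate what membership means on the sharing tree. By \Cref{pro:trans}, and since $A$ is already an antichain (so $\maxac{A} = A$ and every element of $A$ is maximal), we have $\vec{u} \in \close{A}$ if and only if there is some $\vec{v} \in A$ with $\vec{u} \leq \vec{v}$. Recalling that each root-to-leaf path of the sharing tree spells exactly one vector $\vec{v} \in A$, with the node visited in layer $N_i$ carrying the value $v_i$, this is equivalent to the existence of a root-to-leaf path whose successive node values $v_1, \dots, v_k$ satisfy $u_i \leq v_i$ for every $1 \leq i \leq k$. I would call such a path a \emph{dominating path} for $\vec{u}$, so that membership reduces to deciding the existence of a dominating path.

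The key structural observation, which I expect to be the crux of the argument, is that the existence of a \emph{dominating completion} below a node $n \in N_i$ --- that is, of a path from $n$ down to layer $N_k$ along which every value dominates the corresponding component of $\vec{u}$ --- depends only on $n$ and on the fixed query $\vec{u}$, and not on the particular path by which the search reached $n$. This holds because domination is tested componentwise and the comparison performed at a node in layer $N_i$ always uses the same component $u_i$, irrespective of history. Consequently the predicate ``$n$ admits a dominating completion for $\vec{u}$'' is well-defined per node, which is precisely what licenses memoization over the shared DAG structure and prevents the exponential blow-up that a naive path-enumerating DFS would incur on a plain tree.

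With this in hand, the algorithm and its analysis fall out. I would run the DFS from the root, and at each visited node $n \in N_i$ first test $u_i \leq \mathrm{val}(n)$; if this fails, $n$ is pruned immediately, otherwise the search recurses into $\mathrm{succ}(n)$ and reports success as soon as it reaches a leaf in $N_k$ completing a dominating path. Each node that is fully explored without yielding success is marked so that it is never expanded again through another incoming edge; the correctness of this marking is exactly the per-node well-definedness established above, and can be proved formally by a straightforward induction on the layer index from $N_k$ upward. For the running time, every node is therefore expanded at most once and every edge traversed at most once, so the total work is linear in the size of the sharing tree, namely $O(km)$ by \Cref{lem:build-st}.
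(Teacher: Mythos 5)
Your proof is correct and follows essentially the same route as the paper: a pruned depth-first search over the sharing tree, charged at constant work per node and edge, with the $O(km)$ bound on the number of nodes and edges coming from \Cref{lem:build-st}. The one point you develop that the paper leaves implicit under ``linear-time complexity of DFS'' is the per-node well-definedness of the dominating-completion predicate, which justifies marking visited nodes of the DAG so that sharing does not cause re-exploration; this is a worthwhile clarification but not a different argument.
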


\noindent
We again remark that the sharing tree may be of logarithmic size with respect to $A$. Hence, the DFS may in fact be much more efficient than a search over $A$, even if the conclusion is that the given vector is not a member of $\close{A}$.

As a small optimization, we note that having stored the successors of every node in decreasing label order allows for early exits. Indeed, if during the DFS we encounter a node whose first successor has a smaller label than the corresponding component of $\vec{u}$, there is no need to check the remaining branches following the remaining successors. Similarly, when searching at the second-to-last layer of the sharing tree, a dominating vector is encoded by some branch only if the label of the last successor is larger than the last component of $\vec{u}$.

\subsubsection{Covering sharing trees.} We are using Zampuni\'eris' version of sharing trees~\cite{zampunieris1997sharing} while ensuring we encode an antichain. This is in contrast to \emph{covering sharing trees}, as proposed in \cite{DBLP:journals/sttt/DelzannoRB04}, which do not encode the antichain of maximal elements only. Due to the usage of an \emph{approximate} domination check, covering sharing trees could be much larger than the sharing tree of the antichain of maximal elements. Moreover, the union and intersection algorithms for covering sharing trees are more graph-based and reminiscent of BDD operations than the ones we give below (see following subsection, cf. \Cref{sec:csts}). These facts make complexity comparisons with the other approaches in this work more difficult and less interesting. In \Cref{sec:experiments} we do present some empirical evidence showing that ensuring that the encoded set is an antichain results in size and time gains.

\subsection{The union and intersection operations}
As before, for the binary operations, we shall fix a second antichain $B \subset
\mathbb{N}^k$ with $n$ vectors representing a $k$-dimensional downset. We also extend our bound on the maximal norm so that $W = \max_{\vec{v} \in A \cup B} \lVert \vec{v} \rVert_\infty$. Our sharing-tree-based algorithms for the binary operations are mostly identical to the list-based ones. However, their complexity is higher due to the added cost of building sharing trees compared to just keeping a list of vectors.

Recall that the antichain $C = \maxac{\close{A} \cup
\close{B}}$ is a subset of both $A$ and $B$, see~\Cref{lem:union}. Moreover, for both $A$ and $B$, the vectors are added to $C$ if they are not in the closure of the respective other set, so if there is no strictly larger vector. As shown, these checks can be realized using a DFS in time
$O(kmn)$. Finally, constructing the sharing tree for $C$ takes time
$O(k(m+n)\min(m+n, W))$.
\begin{theorem}
    There is a sharing-tree-based algorithm for the union operation that runs in time 
    $O(kn(m+\min(n, W))$ assuming (w.l.o.g.) that $m \leq n$.
\end{theorem}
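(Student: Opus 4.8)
The plan is to mirror the list-based union algorithm, replacing each linear scan with a depth-first domination query on a sharing tree, and then to account separately for the filtering phase and for the cost of rebuilding the output as a sharing tree. By \Cref{lem:union}, an element $\vec{u} \in A$ belongs to $C = \maxac{\close{A}\cup\close{B}}$ exactly when no $\vec{v} \in B$ satisfies $\vec{u} < \vec{v}$; applying the same lemma with the roles of $A$ and $B$ swapped, an element $\vec{v} \in B$ belongs to $C$ exactly when no $\vec{u} \in A$ satisfies $\vec{v} < \vec{u}$. Since $C \subseteq A \cup B$, we have $C = (C\cap A)\cup(C\cap B)$, so I would (i) run, for every $\vec{u} \in A$, a strict-domination search against the sharing tree of $B$, keeping $\vec{u}$ iff the search fails; (ii) perform the symmetric pass for every $\vec{v} \in B$ against the sharing tree of $A$; and (iii) feed the surviving vectors to \Cref{alg:build_stree} to obtain the sharing tree of $C$.

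For the cost, I would bound the queries via \Cref{lem:build-st} together with the linear-time behaviour of DFS: a single domination query against the tree of $B$ (of size $n$, hence $O(kn)$ nodes and edges) costs $O(kn)$, and symmetrically a query against the tree of $A$ costs $O(km)$. Performing $m$ queries of the first kind and $n$ of the second therefore costs $m\cdot O(kn) + n\cdot O(km) = O(kmn)$ in total. Since $C$ has at most $m+n$ elements, step (iii) costs $O(k(m+n)\min(m+n,W))$ by \Cref{lem:build-st}.

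It then remains to combine the two bounds under the assumption $m \leq n$. Here $m+n = \Theta(n)$, so the build cost collapses to $O(kn\min(n,W))$ (and building the two input trees, if they are not already given, fits the same budget). Adding the filtering cost gives $O(kmn + kn\min(n,W)) = O(kn(m+\min(n,W)))$, which is the claimed bound.

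The delicate points, rather than the arithmetic, are the correctness issues I would verify with care. First, the query is for \emph{strict} domination, so I would use the minimally modified DFS that additionally records whether at least one component was strictly smaller; this keeps each search linear but must be argued to decide $\vec{u} < \vec{v}$ correctly. Second, a vector occurring in both $A$ and $B$ survives \emph{both} passes, since the strict inequalities in \Cref{lem:union} never eliminate it, so it would be produced twice; I would rely on the fact that \Cref{alg:build_stree} constructs the minimal acyclic DFA of a \emph{set} and hence deduplicates, so $C$ is still represented correctly. Finally, the $O(km)$ and $O(kn)$ per-query figures are worst-case: sharing can make individual searches cheaper, but using the size bound suffices for the stated complexity. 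I expect reconciling the direction of the strict inequality across the two passes with this equal-vector edge case to be the main thing to get right.
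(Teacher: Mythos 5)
Your proposal follows the paper's argument exactly: filter $A$ against the sharing tree of $B$ and vice versa via DFS domination queries in $O(kmn)$ total, then rebuild the result as a sharing tree in $O(k(m+n)\min(m+n,W))$, and collapse the sum under $m\leq n$ to the stated bound. Your extra care about strict versus non-strict domination and about deduplication of vectors common to $A$ and $B$ goes slightly beyond the paper's own (terse) justification but does not change the route.
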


For intersection, our starting point is the set $A \sqcap B$, as per \Cref{lem:inter}.
To compute the antichain $C = \maxac{A \sqcap B}$, we first construct a sharing tree for the set $A \sqcap B$. Then, we use the sharing tree to check, for each vector in the set, whether it is strictly smaller than some other vector. Finally, we construct a second sharing tree for the vectors for which the result of the previous check was negative (\ie the maximal elements from $A \sqcap B$). The number of possible meets $A \sqcap B$ is at most $mn$ and can be enumerated in time $O(kmn)$. Conveniently, constructing the sharing tree removes duplicates due to the minimization step. So constructing a sharing tree for $A \sqcap B$ is as simple as enumerating all the meets and constructing a sharing tree, the latter in time $O(kmn\min(mn, W))$. Hence, building the first sharing tree can be done in time $O(kmn\min(mn, W))$.
Then, for each vector in the set, the strict membership takes time $O(kmn)$, yielding a total time of $O(k(mn)^{2})$. To conclude, we note that, in the worst case, constructing the second tree is just as costly as constructing the first one.

\begin{theorem}
    There is a sharing-tree-based algorithm for the intersection operation that runs in time $O(km^2n^2)$.
\end{theorem}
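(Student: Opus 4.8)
The plan is to reduce the problem, via \Cref{lem:inter}, to computing $\maxac{A \sqcap B}$, which equals the target antichain $C = \maxac{\close{A} \cap \close{B}}$, and then to extract its maximal elements by reusing the sharing-tree membership routine as a strict-domination filter. First I would enumerate the set $A \sqcap B$ of all component-wise minima of pairs $(\vec{u}, \vec{v}) \in A \times B$. There are at most $mn$ such meets and each is computed in $O(k)$ time, so this enumeration costs $O(kmn)$.

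Next, I would build a sharing tree for $A \sqcap B$ by invoking \Cref{lem:build-st} on the (multiset of) $mn$ enumerated meets. The minimization folded into the construction conveniently discards duplicate meets for free, so no separate deduplication step is needed. By \Cref{lem:build-st}, applied to $mn$ vectors of max norm $W$, this first tree is built in time $O(kmn \min(mn, W))$ and has $O(kmn)$ nodes and edges.

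The filtering pass removes every meet that is strictly dominated by another meet, since a meet lies in $C$ exactly when it is maximal in $A \sqcap B$. For each of the $mn$ meets I would run a (minimally modified, strict) membership DFS against the sharing tree just built. The total is $O(k(mn)^2) = O(km^2n^2)$. Finally, I would build a second sharing tree over the surviving maximal meets, which in the worst case is no more expensive than building the first one.

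Summing the four contributions, the dominant term is the $O(km^2n^2)$ of the filtering pass, yielding the claimed bound. The main subtlety I expect is twofold. The first is correctness of the strict-domination filter, namely that $\maxac{A \sqcap B}$ is precisely the set of meets admitting no strictly larger meet; this follows from the definition of $\maxac{\cdot}$ together with \Cref{lem:inter}. The second, and the point most easily mishandled, is charging each membership query to the \emph{tree size} rather than to the number of represented vectors: because the sharing tree for $A \sqcap B$ has $O(kmn)$ nodes and edges (and not $O(km)$ as in the antichain case), each DFS costs $O(kmn)$ in the worst case, which is what drives the quadratic-in-$mn$ total.
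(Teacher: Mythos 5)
Your proposal is correct and follows essentially the same route as the paper: enumerate the at most $mn$ meets in $O(kmn)$, build a sharing tree for $A \sqcap B$ (with deduplication falling out of minimization) in $O(kmn\min(mn,W))$, filter by strict-domination DFS at cost $O(kmn)$ per query for a total of $O(k(mn)^2)$, and rebuild a tree for the survivors. The subtlety you flag---charging each DFS to the $O(kmn)$-sized tree rather than to $O(km)$---is exactly the accounting the paper uses.
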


\section{\kdtree-based antichain algorithms}

One final time, let us fix $k \in \mathbb{N}_{>0}$ and a finite $k$-dimensional downset represented by the antichain $A \subset \mathbb{N}^k$ of its $m$ maximal elements. In this section, we encode $A$ using a (real) tree structure that generalizes binary search trees to higher dimensions. In fact, we will make use of \kdtrees.

We closely follow the presentation from~\cite[Ch. 5.2]{kdtree_book}. To present
the theory as close as possible to what we implement for the forthcoming
experiments, we introduce some additional notation. Mainly, we need a clear
definition of the median of a sorted list with repeated elements and a total
strict order on the elements of the list (cf.~\cite[Ch. 5.5]{kdtree_book}).

Consider a sequence $x_1, x_2, \dots, x_p$ of natural numbers. Let us write $x_i \prec x_j$ if and only if $x_i < x_j$ or $x_i = x_j$ and $i < j$. That is, $\prec$ is the lexicographic order obtained from the order on the elements in the sequence and their indices. Now, the $\prec$-median of the sequence is the $\lceil \nicefrac{p}{2} \rceil$-th $\prec$-largest number in the sequence.


\subsection{Growing \kdtrees}\label{sec:build-kdtree}
Intuitively, a \kdtree is the natural generalization of a binary search tree from $1$ to multiple dimensions. When going down a branch of the tree by $j$ levels, the remaining set of vectors is partitioned based on the $(j+1)$-th coordinate. The algorithm to build a \kdtree is given in \Cref{alg:build_tree}, where $j=0$ initially.

\begin{algorithm}
\caption{BuildKDTree($A,j$)}\label{alg:build_tree}
\begin{algorithmic}[1]
\If{$A=\Set{\vec{v}}$}
    \State Return leaf storing $\vec{v}$
\EndIf
    \State $i \gets (j \mod k) + 1$ \label{code:mod}\Comment{The current dimension}
    \State $\vec{\mu} \gets$ the $\prec$-median of $A$ with respect to the $i$-th coordinate
    \State $B_{\prec} \gets \Set{\vec{v} \in A \mid v_i \prec \mu_i}$
    \State $B_{\succ} \gets \Set{\vec{v} \in A \mid v_i \succ \mu_i} \cup \Set{\vec{\mu}}$
    \State $T_{\prec} \gets$ BuildKDTree($B_\prec,j+1$)
    \State $T_{\succ} \gets$ BuildKDTree($B_{\succ},j+1$)
    \State Return tree $T$ with children $T_\prec$ and $T_{\succ}$ and label $\mu_i$
\end{algorithmic}
\end{algorithm}

\noindent
Note that when $k$ is small, e.g. $2$, then we need to make sure we are still partitioning the vectors according to valid dimensions even after the second layer of the tree. This is why \Cref{code:mod} requires a modulo operator.

It is instructive to consider the case where all elements of $A$ are unique with respect to all coordinates. Here, one can think of each step of the algorithm as computing the median of the vectors according to dimension $i$, splitting the space $\mathbb{N}^k$ into two on the $i$-th coordinate: the left subtree encodes the region of space consisting of all the vectors whose $i$-component is strictly less than that of the median; the right one, all those greater than or equal to it. The use of $\prec$ is really only a technicality useful to ensure the tree is balanced.

\begin{lemma}[{\cite[Lemma 5.3]{kdtree_book}}]\label{lem:build-complexity}
    A \kdtree for a set of $m$ vectors in $k$ dimensions can be constructed in time $O(m \log m)$.
\end{lemma}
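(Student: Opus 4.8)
The plan is to bound the running time of \Cref{alg:build_tree} by a standard divide-and-conquer recurrence, where the central observation is that the only non-recursive work in a call $\textsc{BuildKDTree}(A,j)$ is (i) selecting the $\prec$-median $\vec\mu$ of $A$ along the single coordinate $i$ and (ii) splitting $A$ into $B_\prec$ and $B_\succ$. I would show that both steps cost $O(|A|)$ and that the split is balanced, so that with $T(p)$ denoting the time on an input of size $p$ one gets
\[
  T(p) = T(\lfloor p/2\rfloor) + T(\lceil p/2\rceil) + O(p),
\]
which solves to $O(p\log p)$, i.e.\ $O(m\log m)$ at the root.

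First I would argue the linear cost of the median step. Rather than sorting, I would invoke a worst-case linear-time selection algorithm (e.g.\ median-of-medians) to extract the $\lceil |A|/2\rceil$-th $\prec$-largest element. Since $\prec$ compares the $i$-th coordinates first and breaks ties by index, each comparison touches a single natural number and, by the random-access model fixed in the preliminaries, costs $O(1)$; hence selection runs in $O(|A|)$. The subsequent partition of $A$ into $B_\prec$ and $B_\succ$ is a single linear scan comparing each $v_i$ against $\mu_i$, again $O(|A|)$.

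Next I would establish that the split is balanced. Because $\prec$ is a total strict order on the $i$-th coordinates of the vectors of $A$ (ties broken by index), choosing $\vec\mu$ as the $\lceil|A|/2\rceil$-th $\prec$-largest leaves exactly $\lfloor|A|/2\rfloor$ elements strictly $\prec$-below it (forming $B_\prec$) and $\lceil|A|/2\rceil-1$ strictly above (which, together with $\vec\mu$, form $B_\succ$). Thus $|B_\prec|=\lfloor|A|/2\rfloor$ and $|B_\succ|=\lceil|A|/2\rceil$, justifying the balanced recurrence; this is precisely the role of the tie-breaking $\prec$ order. Summing the $O(|A|)$ work over all nodes at a fixed depth gives $O(m)$ per level (the subset sizes at each level partition the $m$ input vectors), and the balanced split bounds the depth by $O(\log m)$, yielding the claimed $O(m\log m)$ total.

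The point I would take the most care with---and which is the crux for this paper---is that \emph{no} factor of $k$ appears. This hinges on two implementation choices I would make explicit: every comparison during selection and partitioning involves only a single coordinate (so it is $O(1)$, never $O(k)$), and the sublists $B_\prec$, $B_\succ$, and the leaves store references to the vectors of $A$ rather than copies, so constructing them costs $O(1)$ per vector. With full-vector copies the leaf creation alone would contribute $O(km)$; storing pointers removes this and keeps the bound at a clean, dimension-independent $O(m\log m)$.
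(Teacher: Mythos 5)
Your proof is correct and follows essentially the same route as the paper's (which is the standard textbook argument it cites): a worst-case linear-time median selection, a balanced split guaranteed by the tie-breaking order $\prec$, and the divide-and-conquer recurrence $T(p)=T(\lfloor p/2\rfloor)+T(\lceil p/2\rceil)+O(p)$ solving to $O(m\log m)$. Your extra care in verifying that no factor of $k$ appears (single-coordinate comparisons only, and storing references rather than vector copies) is a worthwhile elaboration given the paper's emphasis on $k$ being part of the input, but it does not change the underlying argument.
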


We highlight a discrepancy between \kdtrees in fixed dimension and those where $k$ is part of the input. In most presentations of \kdtrees (e.g.~\cite{kdtree_book,brass_2008}) it is suggested that computing the median on every recursive call of the algorithm to build the tree can be avoided by doing some preprocessing: compute a sorted version of the initial list for each dimension and  split these into sorted sublists for the recursive calls. Note that even the presorting introduces a dependency on $k$ as sorting $k$ lists costs us $O(km \log m)$ if the dimension is part of the input. Hence, computing the median in each recursive call does seem better in our case.

\subsection{The membership problem}\label{sec:ac-membership}
Let $\vec{u} \in \mathbb{N}^k$. Now, given a \kdtree $T$ encoding the antichain $A$, we will appeal to \Cref{pro:trans} and use $T$ to determine whether $\vec{u} \in \close{A}$ by searching the tree to find some $\vec{v} \in A$ such that $\vec{u} \leq \vec{v}$. First, we introduce some additional notation.

To each internal node $T$ we associate a region\footnote{Technically, we are dealing with the natural vectors contained in the region. We find that the name region still conveys the right intuition.} $\Reg(T) \subseteq \mathbb{N}^k$ inductively. We start with the entirety of $\mathbb{N}^k$ at the root. Then, for a node $T$ at depth $j$ and with $\vec{\mu}$ the $\prec$-median used to build it, to its right child $T_\succ$, we associate
$\{\vec{v} \in \Reg(T) \mid v_i \geq \mu_i, i = (j \mod k) + 1\}$; to its left child $T_\prec$, the region $\{\vec{v} \in \Reg(T) \mid v_i < \mu_i, i = (j \mod k) + 1\} \cup M$, where $M$ is the set of all vectors
$\vec{v} \in \Reg(T)$ with $v_i = \mu_i$ if $B_{\prec}$ (as defined in \Cref{alg:build_tree}) contains any such vector and $\emptyset$ otherwise. We also write $\Reg(\vec{u})$ for the 
region $\{\vec{v} \in \mathbb{N}^k \mid \vec{u} \leq \vec{v}\}$, \ie the \emph{upward closure} of
$\vec{u}$.   The following is our search algorithm:

\begin{algorithm}
\caption{SearchKDTree($T,\vec{u}$)}\label{alg:search_tree}
\begin{algorithmic}[1]
\If{$T$ is a leaf storing $\vec{v}$}
    \State Return whether $\vec{u} \leq \vec{v}$
\EndIf
    \If{$\Reg(T_\succ) \subseteq \Reg(\vec{u})$} \label{line:tsucc-inc}
        \State Return true
    \EndIf
        \State $R_\succ \gets$ SearchKDTree($T_\succ,\vec{u}$), $R_\prec \gets $ false \label{line:tsucc}
    \If{$\Reg(T_\prec) \cap \Reg(\vec{u}) \neq \emptyset$}
        \State $R_\prec \gets$ SearchKDTree($T_\prec,\vec{u}$) \label{line:tprec}
    \EndIf
    \State Return $R_\succ \lor R_\prec$
\end{algorithmic}
\end{algorithm}

By construction of a \kdtree 
$T$, if $\Reg(T) \subseteq \Reg(\vec{u})$, 
all vectors $\vec{v}$ stored at the leaves of $T$ are such that $\vec{u} \leq
\vec{v}$. There are no empty \kdtrees, so this instance of the membership
problem is positive. Conversely, if $\Reg(T) \cap \Reg(\vec{u}) = \emptyset$,
for all vectors $\vec{v}$ stored at its leaves we have $\vec{u} \not\leq
\vec{v}$. 
Since $\Reg(\vec{u})$ is upward closed:
\begin{itemize}
    \item $\Reg(T_\succ) \cap \Reg(\vec{u}) \neq \emptyset$ if $\Reg(T) \cap \Reg(\vec{u}) \neq \emptyset$, and
    \item $\Reg(T_\prec) \subseteq \Reg(\vec{u})$ only if $\Reg(T) \subseteq \Reg(\vec{u})$. 
\end{itemize}
It follows that \Cref{alg:search_tree} returns true if and only if the
instance is positive. It remains to study its complexity.\footnote{It is
interesting to compare the $k^2$ factor in the bound with the linear one $k$
informally claimed by Chan in the introduction
of~\cite{DBLP:journals/dcg/Chan19}. We did not find a reference for the
claimed $O(k m^{1-\nicefrac{1}{k}})$ bound, nor were we able to re-prove it
ourselves.} 

\begin{theorem}\label{thm:kdt-membership}
    There is a \kdtree algorithm that solves
    membership 
    in time 
    $O(\min(km,k^2m^{1-\nicefrac{1}{k}}))$.
\end{theorem}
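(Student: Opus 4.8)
The plan is to prove the two bounds separately and take their minimum. The $O(km)$ bound is immediate: by \Cref{lem:build-complexity} the \kdtree has $O(m)$ nodes, \Cref{alg:search_tree} visits each at most once, and at every node the region tests cost $O(k)$, since each $\Reg(T)$ is an axis-aligned box maintainable along the root-to-node path by updating a single coordinate bound per descent, so comparing it against the (upward-closed) orthant $\Reg(\vec{u})$ amounts to $k$ numeric comparisons. I therefore focus on the refined $O(k^{2}m^{1-\nicefrac{1}{k}})$ bound. Correctness is already argued in the text (via \Cref{pro:trans} and the two upward-closure observations), so the task is purely to count the nodes the search actually enters and to multiply by the $O(k)$ per-node work.

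First I would characterize the visited nodes. Call a node $T$ \emph{crossing} if $\Reg(T)\cap\Reg(\vec{u})\neq\emptyset$ but $\Reg(T)\not\subseteq\Reg(\vec{u})$, \ie its box is cut by the boundary of $\Reg(\vec{u})$. I claim every internal node into which the algorithm recurses is crossing. The invariant of \Cref{alg:search_tree} is that it is only called on nodes whose region meets $\Reg(\vec{u})$. For the right child, the guard on \cref{line:tsucc-inc} returns true before \cref{line:tsucc} whenever $\Reg(T_\succ)\subseteq\Reg(\vec{u})$, so a recursed $T_\succ$ satisfies $\Reg(T_\succ)\not\subseteq\Reg(\vec{u})$ and, by the invariant, is crossing. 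A recursed left child $T_\prec$ meets $\Reg(\vec{u})$ by the guard preceding \cref{line:tprec}; if it is additionally contained in $\Reg(\vec{u})$, then since $\Reg((T_\prec)_\succ)\subseteq\Reg(T_\prec)\subseteq\Reg(\vec{u})$ the call on $T_\prec$ triggers the early return at \cref{line:tsucc-inc} and does not recurse further. Hence the recursion tree has crossing nodes as its internal nodes, while its leaves are either \kdtree-leaves or contained nodes, each a child of a crossing node, so the number of visited nodes is $O(1)$ times the number of crossing nodes.

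Next I would bound the number of crossing nodes, which is the crux and the place where the dependence on $k$ must be handled carefully. The boundary of $\Reg(\vec{u})$ lies on the $k$ axis-aligned hyperplanes $H_i=\Set{\vec{v}\mid v_i=u_i}$, and any crossing box—containing a point $\geq\vec{u}$ and a point with some $v_i<u_i$—must, by convexity of boxes, meet at least one $H_i$. So it suffices to bound, for a single $H_i$, the number of \kdtree cells it stabs, then sum over $i$. Because the build of \Cref{lem:build-complexity} splits at the $\prec$-median, subtree sizes halve, and over one full cycle of $k$ dimensions a hyperplane perpendicular to dimension $i$ passes into a single child on the level splitting on dimension $i$ and into both children on the other $k-1$ levels. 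This gives, for the count $N(m)$ of cells stabbed by one hyperplane,
\[
N(m)\;\le\;2^{k}+2^{k-1}\,N\!\left(m/2^{k}\right),
\]
where the additive $2^{k}$ bounds the stabbed nodes inside a cycle. Unrolling, the geometric series contributes a factor $1/(2^{k-1}-1)$ that cancels the additive $2^{k}$, leaving $N(m)=O(m^{1-\nicefrac{1}{k}})$ with only a constant hidden factor. Summing over the $k$ hyperplanes gives $O(k\,m^{1-\nicefrac{1}{k}})$ crossing nodes, hence as many visited nodes; multiplying by the $O(k)$ per-node region work yields $O(k^{2}m^{1-\nicefrac{1}{k}})$, and combining with the trivial bound gives $O(\min(km,k^{2}m^{1-\nicefrac{1}{k}}))$.

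The main obstacle is exactly the cancellation in the recurrence above: a naive accounting of the within-cycle work would leave a spurious $2^{k}$ factor, which is unacceptable here since $k$ is part of the input rather than a constant, and I would take care to show this term is absorbed by the denominator $2^{k-1}-1$ so that only the polynomial factor $k^{2}$ survives. A secondary point to handle cleanly is the tie-breaking set $M$ in the definition of $\Reg(T_\prec)$: including or excluding the median hyperplane $v_i=\mu_i$ shifts a box boundary by one hyperplane and affects the stabbing counts only up to constants, so it is the balance guaranteed by the $\prec$-median that the argument relies on. I note finally that the extra factor of $k$ relative to the linear bound alluded to in the footnote stems from charging $O(k)$ per node; one could plausibly shave it by maintaining the containment and intersection tests incrementally in $O(1)$ per descent, but the stated theorem uses the straightforward $O(k)$ region test.
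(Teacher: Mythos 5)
Your argument is, in substance, the paper's own: the same recursion-tree characterization (recurse only into regions that straddle the boundary of $\Reg(\vec{u})$), the same per-coordinate decomposition, and the same ``one child out of every $k$ levels'' count yielding $O(m^{1-\nicefrac{1}{k}})$ per coordinate. The paper phrases the count as a per-level bound $a_j \le 2^{\lfloor(1-\nicefrac{1}{k})j\rfloor}$ where you phrase it as a per-cycle recurrence, and it charges $O(1)$ per internal node (via an incrementally maintained counter) plus $O(k)$ per leaf where you charge $O(k)$ everywhere; both accountings land on $O(k^2 m^{1-\nicefrac{1}{k}})$, so these differences are immaterial.

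The one place your write-up would actually fail is the choice of predicate for the per-coordinate count. You count cells \emph{stabbed} by the hyperplane $H_i=\Set{\vec{v} \mid v_i = u_i}$ and assert that the tie-breaking set $M$ perturbs this count ``only up to constants.'' That is not true: if every vector of $A$ has $i$-th coordinate exactly $u_i$, then every dimension-$i$ split has $\mu_i=u_i$ and $M\neq\emptyset$, so \emph{both} children are stabbed by $H_i$ at every such level, the doubling never pauses, and the number of stabbed cells is $\Theta(m)$, which destroys the intermediate bound $N(m)=O(m^{1-\nicefrac{1}{k}})$ on which your sum over $i$ relies. The repair is exactly the paper's predicate: call $\Reg(T)$ $i$-interesting if it contains $\vec{v},\vec{w}$ with $v_i< u_i\le w_i$. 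A crossing region is still $i$-interesting for some $i$, and at a dimension-$i$ split at most one child can be $i$-interesting no matter how ties are broken, because the right child contains no point with $v_i<\mu_i$ and the left child none with $v_i>\mu_i$, so being $i$-interesting forces $\mu_i<u_i$ for the right child and $\mu_i\ge u_i$ for the left, which cannot both hold. With that substitution your recurrence and the geometric-series cancellation go through verbatim.
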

\begin{proof}
    The first term in the minimum comes from the fact that the tree has at most
    $2m$ nodes. The second term will follow from bounding
    the number of nodes the algorithm treats by
    $O(km^{1-\nicefrac{1}{k}})$. For the remaining $k$ factor in both terms, we observe that, on leaves, one does need to compare $k$ numbers against
    each other. However, on internal nodes of the tree, a recursive call can be
    made to do only a constant amount of work. It is easy to see that checking
    whether the intersection of the regions is empty can be done in constant time:
    we only need to see how $\mu_i$ and $u_i$ compare, where $\vec{\mu}$ is the
    median used to split at this node of the tree and $i = (j \mod k) + 1$ with
    $j$ the depth. Slightly less obvious is the fact that one can determine the
    region inclusions from line~\ref{line:tsucc-inc} in constant time. This can
    be done by keeping track of $k$ variables which store the lower bounds
    of the region of the current node and a counter $c$ initially set to the number of strictly positive components of $\vec{u}$. On
    each recursive call, only one variable needs to be updated and if it becomes
    larger than the corresponding component of $\vec{u}$ we decrement $c$, \ie $c
    \gets c - 1$. Clearly, the inclusion holds if and only if $c = 0$.

    It remains to argue that the number of nodes treated is indeed $O(km^{1-\nicefrac{1}{k}})$.
    Say a region $\Reg(T)$ is
    $i$-interesting, for $1 \leq i \leq k$, if there are $\vec{v},\vec{w} \in
    \Reg(T)$ such that $v_i < u_i \leq w_i$. Essentially, not all vectors
    stored in $T$ are guaranteed to be larger than $\vec{u}$, with respect to
    the $i$-th coordinate, but neither are they all strictly smaller. The
    algorithm makes a recursive call on $T$ only if $\Reg(T)$ is
    $i$-interesting for some $i$. Hence, if for all $i$ we can bound the
    number of nodes of the tree with an $i$-interesting region by
    $O(m^{1-\nicefrac{1}{k}})$, the bound will follow. From here on, our
    argument follows~\cite[Ch. 4.10]{brass_2008}.

    Let $1 \leq i \leq k$ be arbitrary and consider a node $T$ at depth $(i-1)$ from the root. Since $B_\prec$ and $B_\succ$ are constructed based on the $i$-th coordinate, we have that at most one subtree among $\Reg(T_\prec)$ and $\Reg(T_\succ)$ is $i$-interesting. This dichotomy holds again for the subtrees rooted $k$ levels down since, once more, the vectors are split based on the $i$-th coordinate. 
    Now let $a_j$ be the number of nodes at level $0 \leq j \leq \log m$ from the root with an $i$-interesting region. From the analysis above we have that $a_j \leq 2^{\lfloor (1 - \nicefrac{1}{k})j\rfloor}$ since we double the number of nodes at every level except when $i = (j \mod k) + 1$.
    For the total number of nodes (across all levels) with $i$-interesting regions we get the following.
    \[
        a_0 + a_1 + \dots + a_{\log m} \leq 2 (2^{(1-\nicefrac{1}{k})(\log m)}) = 2 m^{1-\nicefrac{1}{k}}
    \]
    Intuitively, every $k$ layers, the doubling does not happen, so we lose a factor of $m^{\frac{1}{k}}$ nodes out of the total $2m$ nodes.
    Since the right-hand side of the equation above is $O(m^{1-\nicefrac{1}{k}})$, as required, this concludes the proof.\qed
\end{proof}

Note that the bound from \cref{thm:kdt-membership} simplifies to
$O(k^2m^{1-\nicefrac{1}{k}})$ if $m \geq 2^{k \log k}$. Henceforth, to
simplify our analysis, we will assume this inequality holds. Nevertheless, for
the claims, we state the bounds in their full generality.

\subsection{The union operation}\label{sec:ac-union}
Let $B \in
\mathbb{N}^k$ be a second antichain with $n$ vectors representing a $k$-dimensional downset.  Our
\kdtree-based algorithm for the union operation follows the one
proposed in \Cref{sec:list-union}. The main difference is that we leverage the
complexity of the \kdtree-based membership problem to obtain a different
complexity bound. 

By~\Cref{lem:union}, to compute $C = \maxac{\close{A} \cup
\close{B}}$ it suffices to remove from $A$ those elements dominated by some element in $B$ and to union them with the elements of $B$ that are not (strictly) dominated by some element in $A$.
The strict domination check can be realized with a
modification of our membership algorithm. The checks take time
$O(k^2mn^{1-\nicefrac{1}{k}})$ and $O(k^2m^{1-\nicefrac{1}{k}}n)$,
respectively. Finally, constructing the \kdtree for $C$ takes time
$O((m+n)\log(m+n))$.

\begin{theorem}
    There is a \kdtree-based algorithm for the union operation that runs in
    time $O(kn\min(m,km^{1-\nicefrac{1}{k}}) +n\log{n})$ assuming (w.l.o.g.) that $m \leq n$. 
\end{theorem}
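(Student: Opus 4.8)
The plan is to transplant the list-based union algorithm of \Cref{sec:list-union} onto \kdtrees, replacing each of its two linear domination sweeps with the tree search of \Cref{thm:kdt-membership}. By \Cref{lem:union}, $C=\maxac{\close{A}\cup\close{B}}$ is obtained by discarding from $A$ every vector strictly below some vector of $B$, discarding from $B$ every vector strictly below some vector of $A$, and taking the set union of what survives; to keep a vector common to $A$ and $B$ exactly once I would run a non-strict domination test on one side and a strict one on the other. Each such test is, up to the trivial tweak that turns a ``$\leq$'' comparison into a strict one, precisely the query of \Cref{alg:search_tree}, so correctness of the whole routine reduces to \Cref{lem:union} together with the already-established correctness of the search algorithm.

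First I would build a \kdtree for $A$ and a \kdtree for $B$; by \Cref{lem:build-complexity} this costs $O(m\log m)+O(n\log n)=O(n\log n)$ since $m\le n$. To prune $A$, I query each of its $m$ vectors against the tree of $B$; by \Cref{thm:kdt-membership} one query costs $O(\min(kn,k^2n^{1-\nicefrac{1}{k}}))$, for a total of $O(\min(kmn,k^2mn^{1-\nicefrac{1}{k}}))$. Symmetrically, pruning $B$ queries each of its $n$ vectors against the tree of $A$, costing $O(\min(kmn,k^2m^{1-\nicefrac{1}{k}}n))$. Finally, assembling the surviving vectors into the output \kdtree for $C$ costs $O((m+n)\log(m+n))=O(n\log n)$, again by \Cref{lem:build-complexity}. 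Thus every tree-construction term is absorbed into the $n\log n$ summand, and it only remains to handle the two query costs.

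It then remains to collapse those two costs into the single stated bound $O(kn\min(m,km^{1-\nicefrac{1}{k}}))$, \ie $O(\min(kmn,k^2nm^{1-\nicefrac{1}{k}}))$. The $B$-pruning cost already matches this exactly, since $k^2m^{1-\nicefrac{1}{k}}n=kn\cdot km^{1-\nicefrac{1}{k}}$ and $kmn=kn\cdot m$. The crux is to show the $A$-pruning cost $O(\min(kmn,k^2mn^{1-\nicefrac{1}{k}}))$ is dominated by it. Both minima share the term $kmn$, so it suffices to compare the second terms, and here the hypothesis $m\le n$ does the work: because $(m/n)^{1/k}\le 1$ we get $mn^{1-\nicefrac{1}{k}}\le nm^{1-\nicefrac{1}{k}}$, whence $\min(kmn,k^2mn^{1-\nicefrac{1}{k}})\le\min(kmn,k^2nm^{1-\nicefrac{1}{k}})$. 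I expect this last inequality to be the only genuinely non-routine step: one has to notice that the asymmetry between the two sweeps---querying the small set against the large tree versus the large set against the small tree---resolves in favour of the claimed bound exactly when $m\le n$. Summing the two query terms with the $O(n\log n)$ construction cost then yields the theorem.
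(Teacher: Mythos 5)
Your proposal is correct and follows essentially the same route as the paper: prune $A$ by (non-strict) membership queries against the tree of $B$, prune $B$ by strict-domination queries against the tree of $A$, and rebuild a tree for the result, with costs $O(\min(kmn,k^2mn^{1-\nicefrac{1}{k}}))$, $O(\min(kmn,k^2m^{1-\nicefrac{1}{k}}n))$, and $O(n\log n)$ respectively. Your explicit verification that $mn^{1-\nicefrac{1}{k}}\le nm^{1-\nicefrac{1}{k}}$ when $m\le n$ (so the $A$-pruning term is absorbed) is a detail the paper leaves implicit, but it is exactly the step needed to arrive at the stated bound.
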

Observe that, even if $2^{k \log k} \leq m, n$, this bound is \emph{not
always} better than the one we get for our list-based algorithm. Indeed, when
$\log(m+n)$ is larger than $km$ or $kn$, the last summand is
already worse than $kmn$. If $m \leq n$, then this cannot happen when, for
instance, $n \leq 2^m$.

\subsection{The intersection operation}
For intersection, we follow~\Cref{sec:list-inter} except that we use our \kdtree-based membership algorithm. 
To compute $C = \maxac{A \sqcap B}$, we first construct a \kdtree for the collection of meets obtained from $A$ and $B$. Note that this may not be a set. Nevertheless, the collection has size at most $mn$. Next, we use the tree to check, for all vectors it encodes, whether they are strictly smaller than some other vector in the tree. If the answer is negative, we add them to a new collection of nondominated meets. Now, to remove duplicates, we sort the collection lexicographically, this can be done in time $O(kmn \log (mn))$, and traverse it in search for consecutive copies of the same vector, this can be done in time $O(kmn)$. Finally, we construct a second \kdtree for the set of vectors. 

Building the first \kdtree can be done in time $O(kmn + mn \log(mn))$. For each vector in the set, the (strict) membership checks can be done in time $O(k^2(mn)^{1-\nicefrac{1}{k}})$, yielding a total of $O(k^2(mn)^{2-\nicefrac{1}{k}})$ checks. After removing duplicates in time $O(kmn \log (mn))$, we construct the second \kdtree in time $O(mn \log (mn))$. The total complexity is summarized in the result below.
\begin{theorem}
    There is a \kdtree-based algorithm for the intersection operation that
    runs in time $O(kmn \min(mn,
    k(mn)^{1-\nicefrac{1}{k}}))$.
\end{theorem}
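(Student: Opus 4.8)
The plan is to follow the four-stage pipeline described in the paragraph preceding the theorem and simply add up the costs of each stage, taking care that the dominant terms combine into the claimed bound $O(kmn \min(mn, k(mn)^{1-\nicefrac{1}{k}}))$. By \Cref{lem:inter}, the target antichain $C = \maxac{\close{A} \cap \close{B}}$ equals $\maxac{A \sqcap B}$, so it suffices to (i) enumerate the meet collection $A \sqcap B$ and build a first \kdtree on it, (ii) run a strict-domination membership check for every one of its (at most $mn$) vectors, (iii) deduplicate the surviving nondominated meets, and (iv) build a second \kdtree on the resulting antichain. The correctness is inherited directly from \Cref{lem:inter} and the correctness of \Cref{alg:search_tree}, so the real content is the running-time accounting.

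**First I would** bound stage (i). The collection $A \sqcap B$ has size at most $mn$ and each meet is computed in $O(k)$ time, so enumeration costs $O(kmn)$. Since this is a collection, not a set, I cannot rely on \kdtree construction to remove duplicates the way the sharing-tree argument did; I just feed all $mn$ (possibly repeated) vectors into \Cref{alg:build_tree}. By \Cref{lem:build-complexity}, building a \kdtree on $N$ vectors costs $O(N \log N)$, here giving $O(mn \log(mn))$. Hence stage (i) is $O(kmn + mn\log(mn))$.

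**Next I would** handle the membership phase, which is where the dominant term appears. For each of the $\le mn$ vectors in the tree I invoke the (strictly modified) search of \Cref{thm:kdt-membership}, each costing $O(k^2 (mn)^{1-\nicefrac{1}{k}})$ since the tree holds at most $mn$ elements. Summing over all $mn$ vectors gives $O(k^2 (mn)^{2-\nicefrac{1}{k}})$ for stage (ii). The deduplication of stage (iii) is a lexicographic sort of at most $mn$ vectors of length $k$, costing $O(kmn\log(mn))$, followed by a linear scan in $O(kmn)$. Finally, stage (iv) rebuilds a \kdtree on the surviving antichain (of size $\le mn$), again $O(mn\log(mn))$ by \Cref{lem:build-complexity}.

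**The last step** is to show that the sum of these four contributions collapses to the stated bound. Writing out the total, $O(kmn + mn\log(mn) + k^2(mn)^{2-\nicefrac{1}{k}} + kmn\log(mn))$, I would argue that under the standing assumption $mn \ge 2^{k\log k}$ (inherited from the remark after \Cref{thm:kdt-membership}, so that the membership bound genuinely simplifies to its second term) the membership cost $k^2(mn)^{2-\nicefrac{1}{k}} = kmn \cdot k(mn)^{1-\nicefrac{1}{k}}$ dominates both the $kmn\log(mn)$ and the bare $kmn$ terms, since $k(mn)^{1-\nicefrac{1}{k}}$ exceeds $\log(mn)$ in that regime. The $\min$ in the statement then simply records that one can alternatively use the first, $O(kmn)$-per-query, branch of \Cref{thm:kdt-membership}, yielding $O((kmn)\cdot mn)$ and hence the $kmn\cdot mn$ term inside the minimum. \textbf{The main obstacle} I anticipate is purely bookkeeping rather than conceptual: justifying that the $\log(mn)$ and $k$-only additive terms are absorbed requires the $mn \ge 2^{k\log k}$ assumption to be invoked cleanly, and I would state explicitly (as the paper does elsewhere) that the bounds are presented in full generality via the $\min$, so no information is lost when the simplifying inequality fails.
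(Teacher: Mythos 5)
Your proposal follows exactly the paper's own argument: the same four-stage pipeline (enumerate meets and build a first \kdtree, run strict-domination checks via the membership algorithm, deduplicate by lexicographic sorting, build a second \kdtree), with the same per-stage costs and the same observation that the $\min$ arises from the two branches of \Cref{thm:kdt-membership}. It is correct and essentially identical to the paper's proof, merely spelling out the absorption of the lower-order terms a little more explicitly.
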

This is better than 
our list-based algorithm, assuming $2^{k \log k} \leq m, n$.

A complementary behavior of $k$-d trees and sharing trees emerges from the fact that \kdtrees perform better when a number does not occur in several vectors in the same dimension, whereas in sharing trees, repetition of a number in a dimension enables ``sharing'', and hence decreases the size of the tree.

\subsection{Discussion: Theory}\label{sec:discussion}
The following table summarizes the complexity bounds of the list and
\kdtree-based algorithms from the previous sections. To recall, $m$ and $n$ are the sizes of antichains $A$ and $B$, respectively, of dimension $k$, and $W$ is the largest integer occurring among all vectors in $A$ and $B$. To simplify, we assume $m\leq n$.
\begin{center}
  \begin{tabular}{p{1.89cm}|p{1.55cm}|p{4.95cm}|p{3.3cm}}
    \bf \centering Operation & \bf\centering Lists & \bf\centering \kdtrees & \multicolumn{1}{c}{\bf\centering Sharing trees} \\
    \hline
    Membership & $O(km)$ & $O(\min(km,k^2m^{1-\nicefrac{1}{k}}))$ & $O(km)$ \\
    Union & $O(kmn)$ & $O(kn\min(m,km^{1-\nicefrac{1}{k}}) +n\log{n})$ & $O(kn(m+\min(n, W))$ \\
    Intersection & $O(km^2n^2)$ & $O(kmn \min(mn,
    k(mn)^{1-\nicefrac{1}{k}}))$ & $O(km^2n^2)$
  \end{tabular}
\end{center}

\noindent Our analysis confirms the
empirical findings of~\cite{DBLP:conf/tacas/CadilhacP23}: \kdtrees are not
always a better data structure than lists when manipulating antichains. However,
based on the remarks at the end of~\Cref{sec:ac-union}, one could conclude that
dynamically switching from lists to \kdtrees when 
\( 2^{k \log k} \leq m \leq n \leq 2^m \) can result in a good tradeoff. We implemented this, but unfortunately the size-to-dimension ratio of most antichains in our experiments does not trigger a switch to \kdtrees.



\section{Experiments}\label{sec:experiments}
We implemented several variations of the list-, \kdtree-, and sharing-tree-based
algorithms using a generic library for partially-ordered sets of vectors.%
\footnote{Links to our source code and benchmarks will be made
  available after acceptance.} 
Tests ``in a vacuum,'' where the data structure operations are benchmarked on
random data, exhibit the expected behavior dictated by their theoretical
complexity.  We do not report on these unsurprising conclusions (appearing in
the appendix).
Our
main interest lies in tests ``in the field,'' that is, in applications that rely
on antichains; one specific aim is to establish whether the conditions for
\kdtrees or sharing trees to outperform list-based downsets are met in practice.
We focus on two such applications: LTL-realizability and parity game solving,
relying on benchmark sets used in authoritative competitions.  For both
applications, we formally present the computational task at hand, the
downset-based algorithm to solve it, and experimental results.  We also study
the ratio of size vs.\ dimension within the benchmark sets used in these
applications.

All the following experiments were carried on an Intel\textsuperscript{\textregistered}
Core\textsuperscript{\texttrademark} i7-8700 CPU @ 3.20GHz paired with 16GiB of memory. 

\subsection{LTL-realizability}
\label{sec:ab}

\subsubsection{The task.}

Let $I$ and $O$ be disjoint and finite sets of \emph{input} and \emph{output}
\emph{propositions}.  A \emph{linear-temporal-logic} (LTL) formula over
$P = I \cup O$ specifies temporal dependencies between truth values of the
propositions.  Formulas in LTL are constructed from the propositions, the usual
Boolean connectives, and temporal operators ``next'', ``eventually'',
``always'', and ``until'', with their intuitive semantics. (We refer the reader
to \cite{DBLP:books/daglib/0020348} for the formal syntax and semantics of LTL.)
It is well known that the set $\mathrm{Words}(\varphi)$ of all words, over valuations
$2^{P}$ of the propositions, that satisfy a given LTL formula $\varphi$ can be
``compiled'' into an infinite-word automaton. In particular, one can construct a
non-deterministic automaton $\mathcal{N}$ with a \emph{B\"uchi acceptance condition} such
that its language is exactly $\mathrm{Words}(\varphi)$. The B\"uchi acceptance
condition stipulates that infinite runs of the automaton are accepting if they
visit accepting states infinitely often.

LTL realizability can be defined in terms of the aforementioned
automaton. Namely, given the nondeterministic B\"uchi automaton $\mathcal{N}$ constructed
from an input LTL formula $\varphi$ over $P$, we have an \emph{input player} and an
\emph{output player} take turns choosing truth values for $I$ and $O$. This
induces an infinite word over $2^P$. The winner of the game depends on whether
the word is in the language of $\mathcal{N}$: the input player wins if it is \emph{not} in
the language, otherwise the output player wins. (There are other ways of
defining realizability using membership in the language and other acceptance
conditions, but this particular one makes it easier to present the algorithm
below.) The computational task lies in determining if the input player has a
winning strategy for this game.

\subsubsection{The algorithm.}

We succinctly present the downset-based approach of Filiot et
al.~\cite{DBLP:journals/fmsd/FiliotJR11} to solving the task at hand.  Fix a
\emph{Büchi automaton} $\mathcal{N}=(Q, q_0, \delta, B)$ with \(Q\) a set of states,
\(q_0\) the initial state, \(\delta\) the transition relation that uses valuations
$2^P$ as labels, and \(B \subseteq Q\) the set of B\"uchi states.  We will be interested
in \emph{vectors over $Q$}, i.e. elements in \(\mathbb{Z}^Q\) mapping states to integers,
to encode the number of visits to Büchi states---recall the input player wants to
avoid there being a run that visits these infinitely often. We will write
\(\vec{v}\) for such vectors, and \(v_q\) for its value for state \(q\).  In practice,
these vectors will range into a finite subset of \(\mathbb{Z}\), with \(-1\) as an implicit
minimum value (\ie \((-1) - 1\) is still \(-1\)) and an upper bound $k$
that can be thought of as a hyperparameter of the algorithm.

For a vector \(\vec{v}\) over \(Q\) and a valuation \(x\), we define a function that takes
one step back in the automaton, decreasing components that have seen B\"uchi
states. Write \(\chi_B(q)\) for the function mapping a state \(q\) to \(1\) if
\(q \in B\), and \(0\) otherwise.  We then define \(\bwd(\vec{v}, x)\) as the vector
over $Q$ that maps each state \(p \in Q\) to:
\[\min_{(p, x, q) \in \delta} \left(v_q -
  \chi_B(q)\right)\enspace,
\]
and we generalize this to sets: \(\bwd(S, x) = \{\bwd(\vec{v}, x) \mid \vec{v}
\in S\}\).
For a set \(S\) of vectors over $Q$ and a valuation $i \in 2^I$ of the inputs, define:
\[\cpre_i(S) = S \cap \bigcup_{o \in 2^O} \bwd(S, i \cup o)\enspace.\]

It is proved in~\cite{DBLP:journals/fmsd/FiliotJR11} that iterating $\cpre$  converges to
a fixed point that is independent from the order in which the valuation of the inputs is 
selected.  We define \(\cpre^*(S)\) to be that set.

All the sets that we manipulate above are \emph{downsets}. Now, for any $k > 0$,
if there is a \(\vec{v} \in \cpre^*(\{i \in \mathbb{N} \mid i \leq k\}^Q)\) with
\(v_{q_0} \geq 0\) then the input player has a winning strategy. Conversely, there is
a large enough value of $k$ such that if the condition does not hold then the
output player has a winning strategy.

\subsubsection{Experimental results.}

The above algorithm was implemented as the tool
Acacia-Bonsai~\cite{DBLP:conf/tacas/CadilhacP23}, relying on our generic library
for partially-ordered sets.  We considered the benchmarks used in the yearly
competition in LTL-realizability, SYNTCOMP~\cite{syntcomp}.  These consist in
1048 LTL formulas, of which the best LTL tools solve 90\% in under a
second.  We present the experimental results as a survival plot, indicating how
many tests are solved (x-axis) within a time limit (y-axis, time per test).  In
particular, the lower the curve, the better.  To reduce clutter, we focus on the
benchmarks that took the longest: 500 benchmarks are not shown, with all
implementations solving each of them in less than 0.1 seconds.  The tests were
executed with a 60-second timeout.  We observe, crucially, that the dynamic
switching between the two data structures follows closely the list-based
implementation, witnessing the fact that the threshold provided by the theory at
which \kdtrees are more advantageous (see \Cref{sec:discussion}) is rarely crossed.  To illustrate this, we
studied the \emph{ratio} of set size vs.\ dimension; it is also displayed here as
a survival plot, indicating how many sets (x-axis) have a ratio below a certain
value (y-axis).  The sets we considered are all the different values of \(\cpre\),
which, over all the benchmarks, amounts to 67,143 sets.

\begin{figure}[h!]
  \centering
  \begin{subfigure}[t]{0.5\textwidth}
    \includegraphics[height=5cm]{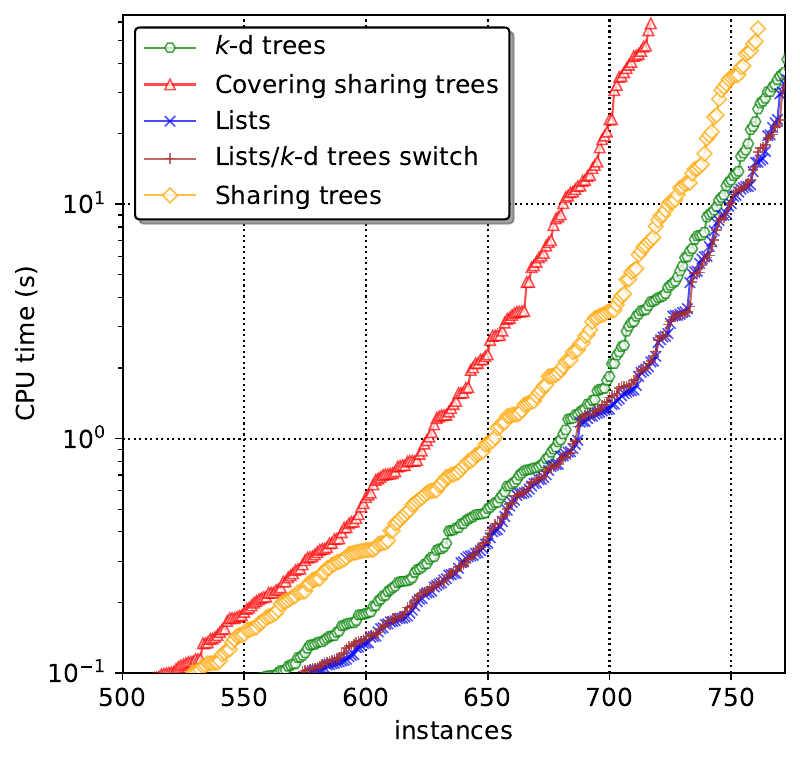}
    \caption{Benchmarking LTL-realizability.}
    \label{fig:ab-cactus}
  \end{subfigure}%
  \begin{subfigure}[t]{0.5\textwidth}
    \includegraphics[height=5cm]{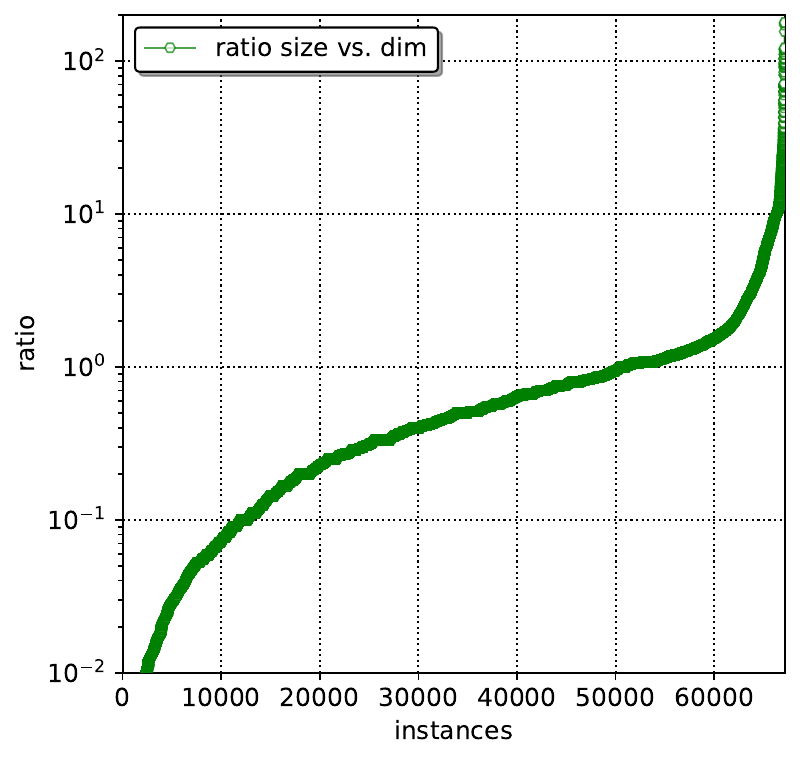}
    \caption{Ratio of size vs.\ dimension.}
    \label{fig:rat}
  \end{subfigure}
  \caption{Survival plots of downset-based LTL-realizability.}
  \label{fig:ab}
\end{figure}

The \kdtree implementation solved 779 test cases, while the list implementations
solved a strict superset of 787 cases.  Sharing tree
implementations performed uniformly worse than all the others, solving 717 test
cases for covering sharing trees, and 761 for sharing trees.  These are in turn
subsets of the ones solved by \kdtree, except for one case.


These results indicate that the size of the antichains and their dimensions are
too low to translate into an edge for the \kdtree structure, but that they
nonetheless achieve similar performances.  This is clearly indicated by
\Cref{fig:rat}: roughly 95\% of sets that are created in the algorithm are of
size that is bounded by twice the dimension.  Larger ratios are only found in
161 of the 787 solved cases.  We present, in Appendix~\ref{sec:5percent},
the same graphics as above but focusing on the instances these 161 instances
that \emph{do} induce set sizes that are at least double the dimension.  No
notable difference in performance is to be reported.

When introduced in 2023, Acacia-Bonsai also included an implementation of
\kdtrees.  We present, in Appendix~\ref{sec:old-kdtrees}, how that
implementation fairs against our current implementation: it would be performing
slightly worse than sharing trees. Concretely, on the set of benchmarks presented in the next subsection, optimized algorithms based on the theoretical insights presented in this paper have led to half an order of magnitude improvement in the performance of the data structure. We speculate that similar improvements are still possible for our implementation of (covering) sharing trees.

\subsection{Parity game solving}

\subsubsection{The task.} A \emph{parity game} $\mathcal{G}$ is a tuple $(V_0,V_1,E,p)$ where $V_0$ and
$V_1$ are disjoint sets of vertices with $V = V_0 \cup V_1$, $E \subseteq V
\times V$ is a set of directed edges, and $p \colon V \to \mathbb{N}$ assigns a
\emph{priority} to each vertex.

In such games, we usually assume an \emph{even player} controls $V_0$ while an
\emph{odd player} controls $V_1$. The players select outgoing edges from their
vertices, and it induces an infinite path from a given starting vertex. The
winner of the game depends on the maximal vertex priority appearing infinitely
often along the path: the even player wins if it is even, otherwise the odd
player wins.  The computational task of parity game solving is to determine for
each vertex \(v\) if the controller of the vertex has a winning strategy in the
game starting in \(v\).

\subsubsection{The algorithm.} We present a new algorithm---to the best
of our knowledge---based on a construction of Bernet et
al.~\cite{DBLP:journals/ita/BernetJW02}, to determine the winner of a parity
game via manipulation of downsets. Our presentation follows the vocabulary
of~\cite[Ch. 6]{DBLP:books/cu/11/0001R11}.  Let us fix a parity game
$(V_0,V_1,E,p)$ and an initial vertex $v_0$. Also, write
$d = \lceil \max_{v \in V} p(v) / 2 \rceil$ and
$p^{-1} \colon \{0, 1, \dots, 2d\} \to 2^V$ for the mapping from every priority to the
set of all vertices labelled by it.
We use vectors $\vec{c} \in \mathbb{N}^d$ to keep track of the number of
visits to odd-priority vertices. More precisely, we consider $\vec{c} \in
\mathbb{Z}^d$ such that $-1 \leq c_i \leq n_i$ for all $1 \leq i \leq d$,
where $n_i = |p^{-1}(2i-1)|$. We adopt the convention that $c_i - a = -1$ for
all $a \geq 1 + c_i$ or if $c_i = -1$, and $c_i + b = n_i$ for all $b \geq n_i
- c_i$. The intuition is that $c_i$ keeps track of the number of visits to
vertices with odd priority $2i - 1$ and having a value of $-1$ means that
reaching a value of $n_i$, hence a simple cycle with that odd
priority, is unavoidable.

For a vector $\vec{c}$ and a vertex $v \in V$, we define an operator to
obtain a predecessor vector (recall that we are counting visits to
odd-priority vertices). Formally,
\[
    \bwd(\vec{c},v) = \begin{cases}
        \vec{c} - \vec{e_i} & \text{if } p(v) = 2i - 1\\
        \vec{c} + \sum_{j=1}^i n_{j} \vec{e_{j}} & \text{if } p(v) = 2i
    \end{cases}
\]
where $\vec{e_i}$ is a vector with $1$ as its $i$-th component and zeros
elsewhere. We lift the operator to sets $C$ of vectors as follows. 
\[
\bwd(C,v) = \{ \bwd(\vec{c},v) \mid \vec{c} \in C\}
\]

Consider a mapping $\mu \colon V \to 2^{\mathbb{N}^d}$ from vertices to sets of
vectors. We 
introduce an update operation on such mappings to
compute over-approximations of the states from which the even player wins. The
initial mapping $\mu_0$ is the closure of the vector of $n_i$, i.e. it assigns the following set to all vertices $v \in V$:
\[
   \mu_0(v) = \close{\{(n_1, n_3, \dots, n_{2i+1}, \dots)\}}\enspace.
\]
Intuitively, this is the worst possible situation for the even player without
having reached the maximal number of visits to some odd-priority vertex. The
update operation $\cpre$, for a given mapping $\mu$, outputs a new mapping
$\nu$ such that:
\begin{gather*}
\nu(u) = \mu(u) \cap C\enspace, \text{ where } \quad
\begin{aligned}
  C & = \begin{cases}
    \bigcup_{(u,v) \in E} \close{\bwd(\mu(v), u)}  & \text{if } u \in V_0\\
    \bigcap_{(u,v) \in E} \close{\bwd(\mu(v), u)}  & \text{if } u \in V_1.
  \end{cases}
\end{aligned}
\end{gather*}

It can be shown that the mappings obtained by iterating $\cpre$ starting from
$\mu_0$ converge. Write $\cpre^*$ for that mapping. We have~\cite[Lemma 6.4]{DBLP:books/cu/11/0001R11} that
for all $v \in V$, there is some $\vec{c} \in \cpre^*(v)$ such that $0 \leq c_i 
\leq
n_i$ for all $1 \leq i \leq d$ if and only if even player wins the parity game when
starting from $v$.

\subsubsection{Experimental results.}

We again rely on the SYNTCOMP24 benchmarks, which has a competition track for
parity game solvers, and augment these benchmarks with the ones provided by
Keiren~\cite{keiren15}.  We implemented our downset-based algorithm in the tool
Oink~\cite{vanDijk18a}, a tool developed to provide a uniform framework for the
comparison of parity game solvers.  The implementation is agnostic to the
downset implementation, allowing for easy comparison of the underlying data
structure.  The 779 benchmarks were each executed with a 60-second timeout and
10GB memory limitation.  The survival plot appears below.

We studied the ratio of set size vs.\ dimension, listing the size of the set
\(C\) (as used in the definition of \(\nu\) above) every time it is computed; this is
a grand total of 1,106,234,004 sets.  For 98\% of them, the ratio was smaller
than 0.03; in fact, out of the billion sets processed by our algorithm, only
746,139 are of size greater or equal to 3.  These are concentrated on only 22
out of 1024 benchmarks and we also display the survival plot corresponding to
these tests only, since they are more likely to favor implementations that
perform well with large-sized sets.

\begin{figure}[h!]
  \centering
  \begin{subfigure}[t]{0.5\textwidth}
    \includegraphics[height=5cm]{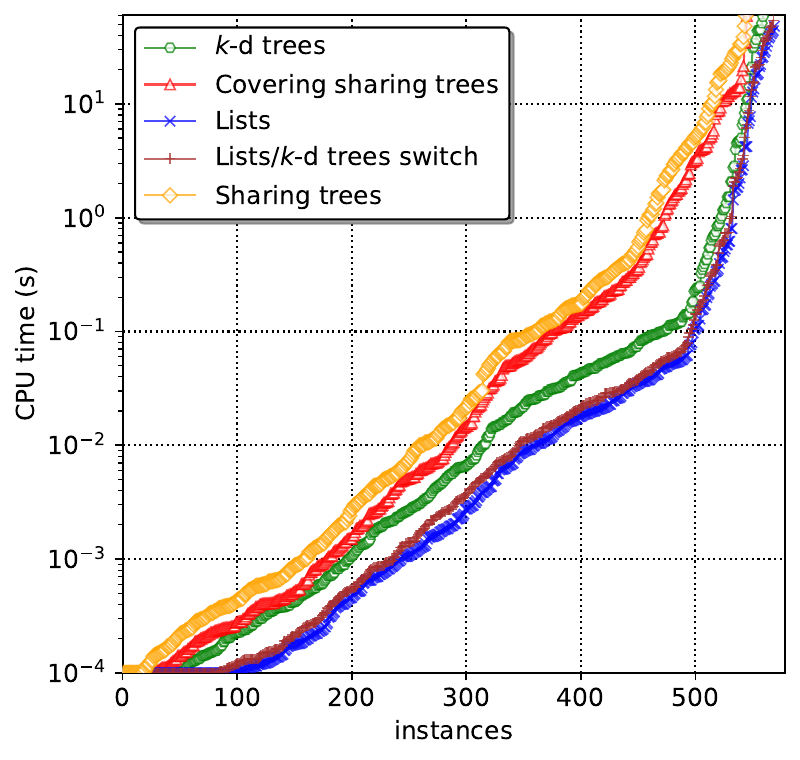}
    \caption{Over all 779 games.}
  \end{subfigure}%
  \begin{subfigure}[t]{0.5\textwidth}
    \includegraphics[height=5cm]{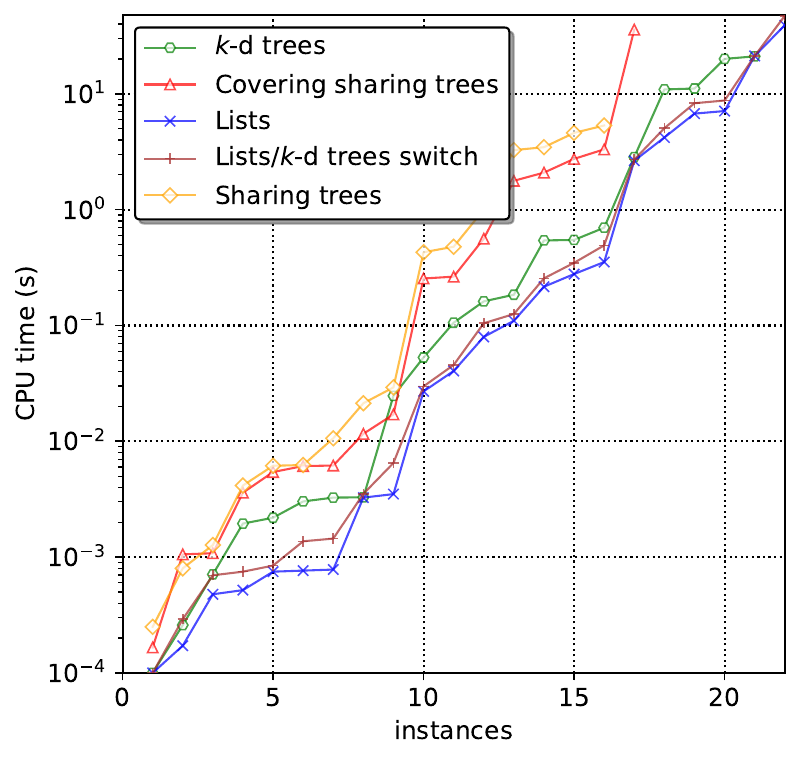}
    \caption{Over games with nontrivial downsets.}
  \end{subfigure}
  \caption{Survival plots of downset-based parity-game solving.}
  \label{fig:oink}
\end{figure}

On these smaller downsets, we see that covering sharing trees does better than
sharing trees. It would seem that when the size of the antichain of maximal
elements of an antichain is not much smaller than the downset itself, the
approximate domination check of covering sharing trees shines
through. Nevertheless, it is clear that list-based algorithms are also best for
the parity-game use case: over all 779 games, they solve 569 test cases, while
\kdtrees solve 559, and (covering) sharing trees 544.

In Appendix \ref{sec:mem}, we also plot the memory consumption of the data
structures, which clearly indicates (covering) sharing trees are the least space
efficient. This supports our earlier claim that (covering) sharing trees have a
worse performance due to them storing larger sets than the other data
structures.

\section{Conclusion}

We provided a theoretical analysis of two data structures for downsets of
natural vectors: list-, sharing-tree-, and \kdtree-based.  We identified when \kdtrees should
outperform the others and provided experiments showing that uses ``in the
field'' are not conducive to cases where \kdtrees outperform the humble
list-based implementation. We posit other antichain-based tools like the Petri-net safety-checking tool MIST \cite{gantyAlgorithmesStructuresDonnees2002} may benefit from using list-based antichains.

For future work, 
it would be interesting to provide average-case complexity bounds for the operations we have studied. Most naive approaches for an average-case analysis of antichain-manipulations seem to require tighter bounds on \emph{Dedekind numbers} (intuitively, the number of different antichains for a given norm bound and dimension) than the ones we found in the literature (see, e.g.~\cite{falgasravry2023dedekinds}). Additionally, our study of antichain size vs. dimension may indicate sharing-tree based antichains benefit from radix-tree like compression~\cite{DBLP:conf/wia/HolubC02}.


\bibliographystyle{splncs04}
\bibliography{bibliography}

\clearpage
\appendix

\section{Membership: best and worst-case scenarios}
Here we give intuition and examples of antichains showcasing the best- and worst-case scenarios for membership analysis in all three data structures. We have $k$ the dimension of the space, $m$ the number of vectors in the antichain $A$, also encoded in a \kdtree $T$ and approximated by a sharing tree $G$, and $\vec{u}$ the vector we want to check for membership in $\close{A}$. We consider two cases: one where the vector being queried is in the downset of the antichain, and one where it is not.
\begin{description}
    \item[The vector is in the downset.] For lists, in the best case, the vector could be smaller than the first vector in the antichain, and the algorithm terminates in $k$ steps, whereas the worst case is when the input vector is only smaller than the last vector in the list, in which case, the algorithm terminates after $km$ steps, i.e. after comparing against all the vectors in the antichain.

    For sharing trees, the best case is similar to that of lists: if the vector is dominated by the one encoded by the first branch of $G$, then the algorithm terminates in $k$ steps.
    In the worst case, the last branch checked is the one dominating $\vec{u}$. This means the whole sharing tree is traversed and by \Cref{lem:build-st} this takes $O(km)$.

    For $k$-d trees, the best case is when $u$ is the zero vector. In this case, $\Reg(T_\succ) \subseteq \Reg(\vec{u})$ and the algorithm terminates in $k$ steps. (The $k$ steps are actually for the initialization of the variable $c$, a single recursive call of the algorithm suffices afterwards! See the proof of \Cref{thm:kdt-membership}.) In the worst case, consider an antichain such that the last component of each vector is 0 except the one which is the rightmost leaf in the $k$-d tree representation. The vector for which we ask membership is $(0,0,\dots,0,1)$. On each recursive call, the algorithm will have to explore both branches of the tree, except at the level corresponding to the last dimension. This leads to the worst-case complexity of $O(k^2m^{1-\frac{1}{k}})$.

    \item[The vector is not in the downset.] For lists, the best and worst cases are the same: the algorithm takes $k m$ steps since the algorithm must compare every vector against $\vec{u}$.

    For sharing trees, the best case is when the first component of $\vec{u}$ is strictly larger than that of all vectors encoded by $G$, and since we keep successors ordered by their label, the data structure allows to exit and assert $\vec{u}$ is not in $\close{A}$, which is in $O(1)$. For the worst case, we again use \Cref{lem:build-st} since we have to traverse the whole sharing tree, and so we get that it takes $O(km)$ time.

    For $k$-d trees, the best case is when $u=(\ell,\ell,\dots,\ell)$ with $\ell$ strictly greater than the largest number occurring in any vector in the antichain. In this case, $\Reg(T_\prec) \cap \Reg(\vec{u}) = \emptyset$ will hold on all recursive calls, hence the algorithm will never enter a ``left'' branch so it will terminate in $O(k + \log m)$ steps.\footnote{Here, again, $k$ comes from the initialization of $c$.} The worst-case behavior can be obtained with the same example as before: suppose $A$ is such that the last component of all its vectors is $0$ and $\vec{u}=(0,0,\dots,0,1)$. Now the $k$-d tree algorithm will have to explore both branches of each subtree except at the level corresponding to the last dimension, which leads to the complexity of $O(k^2m^{1-\frac{1}{k}})$.
\end{description}




\section{Covering sharing tree algorithms}\label{sec:csts}
We now present the variant of sharing trees proposed in \cite{DBLP:journals/sttt/DelzannoRB04} to encode and manipulate antichains. The two main differences between these so-called \emph{covering sharing trees} and the sharing-tree based antichain algorithms we described in this work are the following.
\begin{enumerate}
    \item Covering sharing trees may encode more than just the antichain of maximal elements. Indeed, due to the \emph{approximate} domination checks proposed in \cite{DBLP:journals/sttt/DelzannoRB04}, some dominated vectors are not excluded from the encoded set. This makes comparing the complexity of algorithms on covering sharing trees with the others studied in this work rather hard. Indeed, the difference in size between the antichain of maximal elements of a downset and an arbitrary subset of its closure (which includes the maximal elements) is hard to bound in a nontrivial way.
    \item Algorithms to operate over two covering sharing trees, as proposed in \cite{DBLP:journals/sttt/DelzannoRB04}, are much more graph-based. Indeed, these resemble binary decision diagram algorithms. In contrast the algorithms proposed in this work revert to iterating over the encoded vectors for union and intersection and use the sharing tree representation only to speed up membership queries.
\end{enumerate}

Let $A \subset \mathbb{N}^k$ be the antichain of maximal elements for the downward-closed set of interest.
We use $\mathcal{L}(\mathcal{A})$ to denote the \textit{language} recognized by a DFA $\mathcal{A}$. Ideally, we want a sharing tree encoding $\mathcal{D}_{\lceil V\rceil}$ (i.e., the language accepting the minimal antichain containing $V$ in its downset), but computing this is known to be NP-hard in the size of the sharing tree~\cite{DBLP:journals/sttt/DelzannoRB04}.

Instead, we use a relaxation of $D_{\lceil V\rceil}$ called the \textit{simulation minimal} automaton $D_{\widetilde{V}}$ with the property that $\mathcal{L}({D_{\lceil V\rceil}})\subseteq\mathcal{L}({D_{\widetilde{V}}})\subseteq\mathcal{L}({D_V})=V$. The relaxation is based on a simulation-relation minimization described in~\cite{DBLP:journals/sttt/DelzannoRB04}.

\begin{definition}
    Let $n$ and $m$ be nodes of the $i$-th layer of trees $S$ and $T$, respectively. The node $n$ is \emph{(forward) simulated} by $m$, written \(n\xrightarrow{F} m\), if \(\mathrm{val}(n) \leq \mathrm{val}(m)\) and all successors of $n$ are simulated by some successor of $m$, that is, for all $s \in \mathrm{succ}(n)$ there exists $t \in \mathrm{succ}(m)$ such that \(s\xrightarrow{F} t\).
\end{definition}

We say a sharing tree is \textit{simulation minimal} if no child of a node simulates a sibling. This definition of simulation minimality gives a DFA that is potentially exponentially smaller than the starting set, even if the starting set is an antichain (see~\cite[Sec. 1.3.3]{zampunieris1997sharing} and~\cite[Prop. 2]{DBLP:journals/sttt/DelzannoRB04}).

\subsection{Growing sharing trees}
Now, \Cref{alg:build_stree} can be modified so that every addition of a new node is conditional on it not being (foward) simulated by a sibling that has already been added to the sharing tree. These additional checks mean that building the tree becomes more costly. Even more important is the fact that we no longer require the encoded set to be the antichain of maximal elements whose downward-closure is the original set. We allow vectors that are part of the downward-closure of another encoded vector as long as simulation-minimality holds.

\subsection{The membership problem}
This can still be realized using a DFS.

\subsection{The union operation}

Let $S$ be the sharing tree representing a downset $\close{A}$ and $T$ a sharing tree representing the downset $\close{B}$ of equal dimension $k \in \mathbb{N}_{>0}$. To compute the union of $S \cup T$, we adapt the union algorithm proposed by Zampuniéris \cite{zampunieris1997sharing}. On a node-level, it is shown that the union of two nodes $n$ and $m$ with the same value, or more precisely the subtrees rooted at those nodes, is a node with that shared value and its successors being the union of a successor of $n$ and one of $m$ if they have the same value and a simple copy of all remaining successors of both nodes. Note that due to definition \ref{def:st} (2), there is at most one successor of $m$ that has the same value as any of $n$.

Starting from the roots of $S$ and $T$, the successors of both nodes are iterated over. Due to the descending ordering, there are three main cases that can occur. In the first case, the value of the current successor in $S$ is larger than that in $T$, then the whole subtree rooted at this successor-node can be copied as successor of the resulting union-node. In the second case, all successors in one tree were visited, then the subtrees for the remaining successors in the other tree are copied. The last case is that both successors have the same value, then the union is recursively called for the two nodes and the resulting node added as successor of the union node. 

To ensure a simulation-minimal tree, all cases include a simulation-check before adding to the union-result. This means, copied nodes as well as the newly constructed union-nodes are only added as successor of a node if it has no other successor that simulates the new addition already. 

The algorithm is shown in \Cref{alg:union_stree}. We highlight that here, in contrast to~\cite{DBLP:journals/sttt/DelzannoRB04}, we are trying to get a simulation-minimal result whereas the authors of that work just run simulation minimization afterwards.

\begin{algorithm}
\caption{Union($n_S$, $n_T$, $layer$)}
\label{alg:union_stree}
\begin{algorithmic}[1]
\State $newNode \gets $ node with value of $n_s$
\If{$layer < k$}
    \State $s_s, s_t \gets 0$
    \While{$s_S < |n_S.successors| \lor s_T < |n_T.successors|$}
        \If{$s_S = |n_S.successors| \lor n_S.successors[s_S].val < n_T.successors[s_T].val$} 
            \State $newNode.addSuccIfNotSimulated(n_T.successors[s_T])$
            \State $s_T++$
        \ElsIf{$s_t = |n_t.successors| \lor n_S.successors[s_S].val > n_T.successors[s_T].val$} 
            \State $newNode.addSuccIfNotSimulated(n_S.successors[s_S])$
            \State $s_S++$
        \Else    
            \State $newSucc \gets Union(n_S.successors[s_S], n_T.successors[s_T], layer + 1)$
            \If{$newSucc \neq null$}
                \State $newNode.addSucc(newChild)$
            \EndIf
            \State $s_s++$
            \State $s_t++$
        \EndIf
    \EndWhile
\EndIf
\State return $layer.addNodeIfNotSimulated(newNode)$
\end{algorithmic}
\end{algorithm}
\color{black}

\subsection{The intersection operation}

Let $S$ and $T$ once again be the sharing trees representing downsets. The basic approach to computing the intersection $S \cap T$ directly using the sharing trees is similar to that for the union: The result is computed recursively per node-pair of one node from $S$ and one from $T$ by parallelly iterating over the successors of both nodes. In contrast to union however, we additionally adapt an idea by Delzanno et al~\cite{DBLP:journals/sttt/DelzannoRB04}. They describe a way of computing the intersection for upward-closed sets that relies on creating a pre-sharing tree with nodes for all possible combinations of nodes of the two trees in the same layer and subsequent reduction by enforcing the sharing tree conditions. 

In our algorithm, see \Cref{alg:intersect_st}, we create the new nodes for every pair of one node from $S$ and one from $T$ in the same layer and set its value to the minimum of the two. The successors are then created recursively with every combination of the successors in $S$ and $T$. Just as for the union, the nodes are only added to the final result if they are not simulated by another node already present. 

However, this check has to be implemented differently due to the order in which new successors are added. For the union, the order in which successors are iterated means every successor that is added will have a value smaller than all previously added successors. Consequently, it is only necessary to check simulation in the direction \(existingSuccessor \xrightarrow{F} newSuccessor\). In intersection on the other hand, the order in which successors are added is not guaranteed. Therefore, a newly added successor might have an equal or even larger value than the existing ones. The check \(existingSuccessor \xrightarrow{F} newSuccessor\) will still be executed in all cases, but if it is determined that the new successor should still be added, additional steps are taken.

We will address the two possible cases separately. If a new successor has to be added that has the same value as an existing one, we follow the approach presented by Zampuniéris~\cite{zampunieris1997sharing}: By definition, a node may not have two successors with the same value, therefore we replace the existing successor with the result of the union of the current successor and the new successor to be added. If the new successor has a value larger than the last successor, we add it in the corresponding position in the successor list, following the descending order. Now it is necessary to also check the simulation in the direction \(newSuccessor \xrightarrow{F} existingSuccessor\) for all successors after the insertion point, \ie those with a smaller value. If it is found that the new successor simulates any existing one, it is removed.

\begin{algorithm}
\caption{ST Intersection}
\label{alg:intersect_st}
\begin{algorithmic}[1]
\State $n_s, n_t \gets$ ST nodes
\State $layer \gets$ layer of the two nodes
\State $father \gets$ ST node
\Function{intersect}{$n_S$, $n_T$, $layer$, $father$}
    \State $newNode \gets $ node with value $min(n_S.val, n_T.val)$
    \If{$layer < d$}
        \For{$s_S = 0; s_S < |n_S.successors|; s_S++$}
            \For{$s_T = 0; s_T < |n_t.successors|; s_T++$}
                \State $intersectResult \gets intersect(n_s.successors[s_S], n_T.successors[s_t],$\\
                \hspace{5cm}$layer + 1, newNode)$
                \If{$intersectResult \neq null$}
                    \State $newNode.addSucc(intersectResult)$
                \EndIf
            \EndFor
        \EndFor
        \If{$|newNode.successors| = 0$}
            \State return $null$
        \EndIf
    \EndIf
    \If{$father \neq null$} 
        \State return $layer.addNodeIfNotSimulated(newNode, father)$
    \Else
        \State return $layer.addNode(newNode)$
    \EndIf
\EndFunction
\end{algorithmic}
\end{algorithm}

\section{Strategy synthesis for parity games}
To obtain a strategy for the even player from $\cpre^*$ we follow~\cite{DBLP:conf/stacs/Jurdzinski00,DBLP:journals/ita/BernetJW02} and, from each vertex $\vec{u}$ controlled by the even player, choose a successor $v$ so as to minimize the minimal element in $\{\vec{c} \in \cpre^*(v) \mid \vec{0} \leq \vec{c}\}$ according to the co-lexicographic order where all dimensions $1 \leq i \leq d$ such that $2i < p(u)$ are ignored. 

\newpage 

\section{LTL-realizability on instances with large sets}\label{sec:5percent}

\begin{figure}[h!]
  \centering
  \begin{subfigure}[t]{0.5\textwidth}
    \includegraphics[height=5cm]{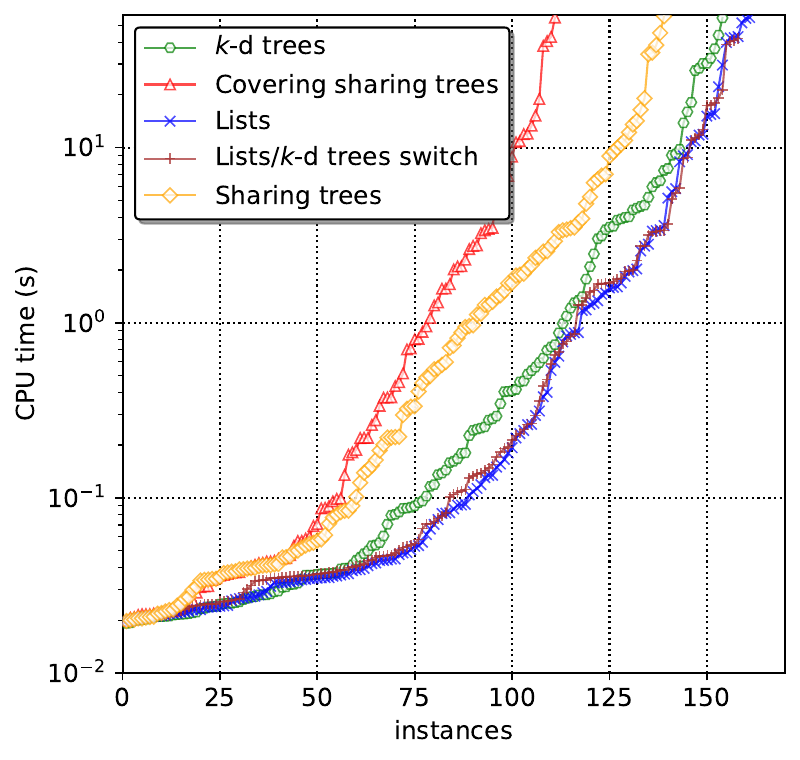}
    \caption{Benchmarking LTL-realizability.}
  \end{subfigure}%
  \begin{subfigure}[t]{0.5\textwidth}
    \includegraphics[height=5cm]{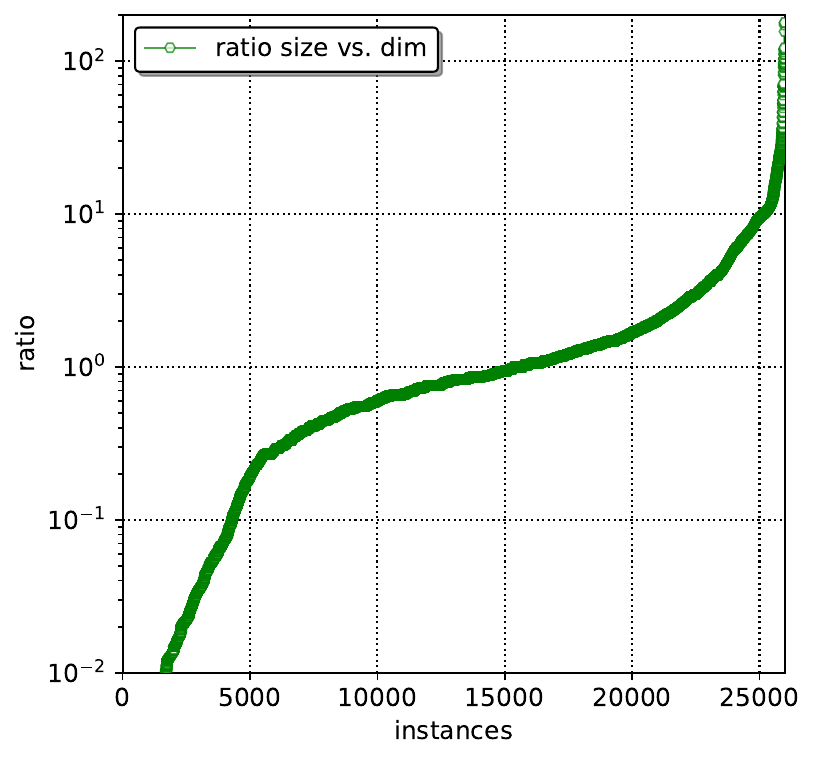}
    \caption{Ratio of size vs.\ dimension.}
  \end{subfigure}
  \caption{Survival plots of downset-based LTL-realizability focusing on the
    test cases that have at least one set that is in the top 5\% in terms of
    ratio size vs.\ dimension.}
  \label{fig:ab2}
\end{figure}

\section{Progress in \kdtree performances based on implementations}\label{sec:old-kdtrees}

The following graphic reports on the gains in performance obtained by building the \kdtree as described in \Cref{sec:build-kdtree} (instead of the classic presorting proposed in computational geometry books) and optimizing
the memory management of \kdtrees, which are the main difference between the
implementation provided in~\cite{DBLP:conf/tacas/CadilhacP23}, and our
implementation. This figure is a survival plot with the same parameters as the
ones of \Cref{sec:ab}.

\begin{figure}[h!]
  \centering
  \includegraphics[height=5cm]{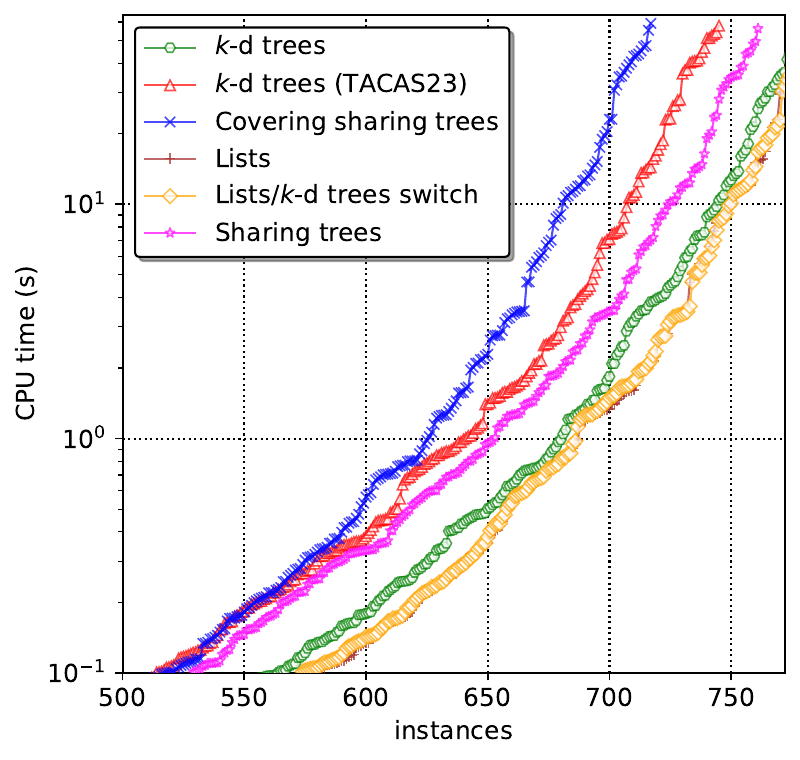}
  \caption{Survival plot of \kdtree-based parity-game solving.}
  \label{fig:progress-kdtrees}
\end{figure}

\section{Memory usage of different data structures in parity-game
  solving}\label{sec:mem}

The following graphic depicts the memory usage of each data structure on all the
parity games benchmarked.  Each vertical line corresponds to one instance, and
these lines are ordered in such a way that the list implementation never
decreases in usage.

\begin{figure}[h!]
  \centering
  \includegraphics[height=5cm]{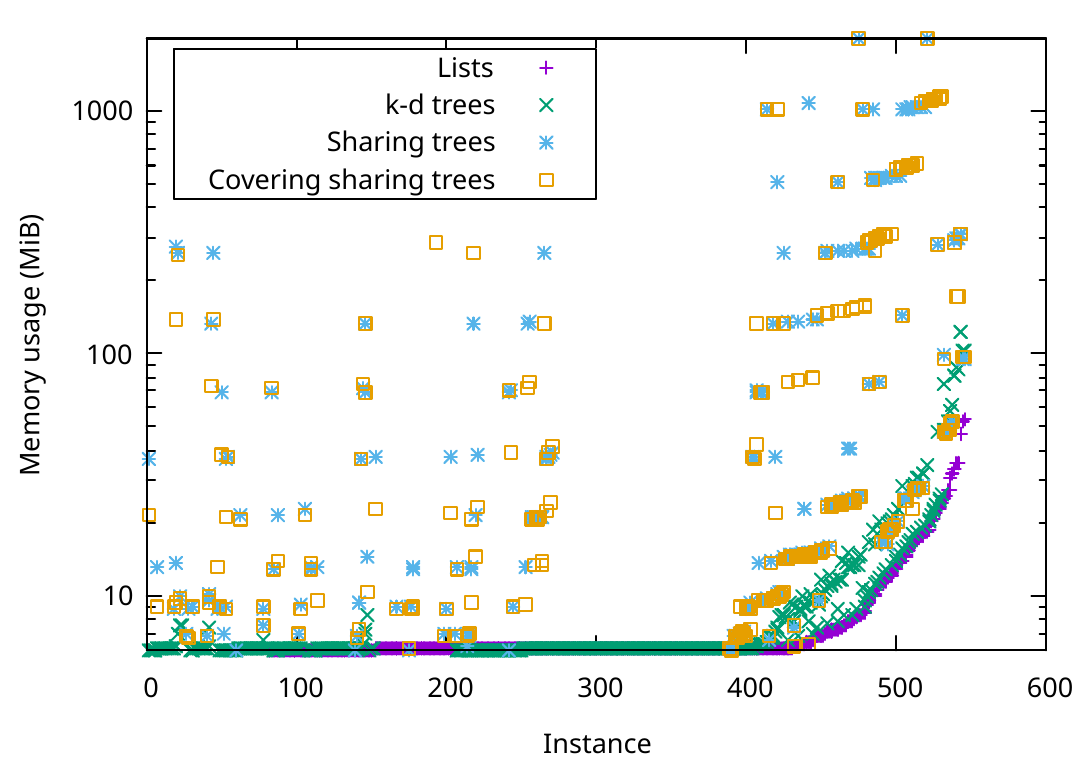}
  \caption{Memory usage of data structures on parity games}
  \label{fig:mem}
\end{figure}

\section{Random benchmarks}

We evaluate the performances of downset operations on random samples.  For
membership, we create a random antichain of size \(t\), and query \(2t\) elements
on the data structures, half of which belong to the downset.  For union and
intersection, we create a random antichain of size \(t\), another antichain of
size \(t\) that overlaps the first for \emph{half} the elements, and perform the
operation.  The dimension is set to 32,000, so that the asymptotic behaviors can
be observed. 

\begin{center}
    \begin{tikzpicture}[scale=0.6,transform shape]
  \datavisualization [our system,xshift=-7cm,
  visualize as smooth line/.list={kdtree,vector,ratio},
  /pgf/data/evaluator=\pgfmathparseFPU,
  style sheet=strong colors,
  style sheet=vary dashing,
  ]

    data[set=kdtree,separator=\space] {
      x y
      00010 0.00139017
00200 0.0376065
00500 0.10973
00800 0.197025
01000 0.258616
01500 0.453138
02000 0.659693
02500 0.92903
03000 1.20033
04000 1.83718
05000 2.61694
06000 3.40566
08000 5.13017
10000 7.3961
    }
    data [set=vector,separator=\space] {
      x y
      00010 0.000272077
00200 0.0169045
00500 0.0646421
00800 0.127349
01000 0.194553
01500 0.385531
02000 0.698927
02500 1.03962
03000 1.48862
04000 2.57431
05000 3.96263
06000 5.49864
08000 9.58891
10000 14.9721
}
data [set=ratio,separator=\space] {
  x r
 00010 0.19571491256465037
00200 0.44951005810165795
00500 0.58910143078465327
00800 0.64635959903565532
01000 0.75228524143904474
01500 0.85080262524882055
02000 1.0594731185566619
02500 1.1190381365510262
03000 1.2401756183716146
04000 1.401229057577374
05000 1.5142227181364494
06000 1.6145592924719436
08000 1.8691212961753705
10000 2.0243236300212275
}

info{
    \node[draw,anchor=west,drop shadow,fill=white] at ($(data visualization
    bounding box.north west) + (0.2cm,-0.5cm)$) {membership};
    };

  \datavisualization [our system,
  visualize as smooth line/.list={kdtree,vector,ratio},
  /pgf/data/evaluator=\pgfmathparseFPU,
  style sheet=strong colors,
  style sheet=vary dashing,
  every label in legend/.style={node style={font=\footnotesize}},
  legend=south outside,
  kdtree={label in legend={text=\kdtree}},
  vector={label in legend={text=lists}},
  ratio={label in legend={text={ratio\ \ \begin{tabular}{c}lists\\\hline \kdtree\end{tabular}}}}
  ]
    data[set=kdtree,separator=\space] {
      x y
      00010 0.00940887
00200 0.0972032
00500 0.252272
00800 0.423089
01000 0.545747
01500 0.868349
02000 1.19528
02500 1.5647
03000 1.92452
04000 2.70064
05000 3.55116
06000 4.48819
08000 6.41159
10000 8.63419
    }
    data [set=vector,separator=\space] {
      x y
      00010 0.00487315
00200 0.055265
00500 0.132979
00800 0.260137
01000 0.348814
01500 0.696163
02000 1.05058
02500 1.6606
03000 2.15851
04000 3.57888
05000 5.45716
06000 7.47485
08000 12.8565
10000 20.1002
}
data [set=ratio,separator=\space] {
  x r
 00010 0.51793148380198684
00200 0.56855124111140376
00500 0.52712548360499789
00800 0.61485172150540435
01000 0.63914964259995932
01500 0.80170875995711399
02000 0.87894049929723594
02500 1.061289704096632
03000 1.1215835636938043
04000 1.3251969903430298
05000 1.5367260275515606
06000 1.6654486552485521
08000 2.0051968388496455
10000 2.3279774941251006
}
info{
    \node[draw,anchor=west,drop shadow,fill=white] at ($(data visualization
    bounding box.north west) + (0.2cm,-0.5cm)$) {union};
    };
  \datavisualization [our system,xshift=7cm,
  visualize as smooth line/.list={kdtree,vector,ratio},
  /pgf/data/evaluator=\pgfmathparseFPU,
  style sheet=strong colors,
  style sheet=vary dashing]

    data[set=kdtree,separator=\space] {
      x y
00003 0.00741401
00014 0.0622973
00022 0.120903
00028 0.202634
00031 0.222358
00038 0.329004
00044 0.408013
00050 0.649798
00054 0.732997
00063 0.940415
00070 1.3967
00077 1.573
00089 2.00309
00100 2.9501
    }
    data [set=vector,separator=\space] {
      x y
00003 0.000990911
00014 0.0497804
00022 0.112236
00028 0.211731
00031 0.293809
00038 0.639374
00044 1.02485
00050 1.59751
00054 1.96174
00063 3.00242
00070 4.23863
00077 5.46858
00089 8.2763
00100 12.1454
}
data [set=ratio,separator=\space] {
  x r
00003 0.1336538526384507
00014 0.79907796967123779
00022 0.92831443388501533
00028 1.0448937493214367
00031 1.3213331654359186
00038 1.9433623907308117
00044 2.5118072218287164
00050 2.4584717096697744
00054 2.6763274610946568
00063 3.1926543068751561
00070 3.0347461874418267
00077 3.4765289256198351
00089 4.1317664208797416
00100 4.1169451882987023
}
info{
    \node[draw,anchor=west,drop shadow,fill=white] at ($(data visualization
    bounding box.north west) + (0.2cm,-0.5cm)$) {intersection};
    };
\end{tikzpicture}
\end{center}

\noindent
For this experiment, our goal was to determine when \kdtrees start performing better than list-based antichains. Hence, to avoid clutter, we focused on these two data structures. Nevertheless, we did try both sharing and covering sharing trees on a number of large-dimension examples (> 8,000) and remark that it is much slower than the other data structures---presumably because, despite sharing, the resulting automaton is rather ``deep and narrow''.

\end{document}